\documentclass[twocolumn,twoside]{article}
\usepackage{amsmath, amssymb}

\usepackage[T1]{fontenc}
\usepackage{tikz}
\usepackage{tikz-3dplot}
\usepackage{xcolor}
\usepackage{cite}
\usetikzlibrary{trees,positioning,decorations.pathreplacing,calc,patterns,patterns.meta,fit,perspective,angles,quotes}
\usepackage{color}
\usepackage{graphicx}
\usepackage{calc}
\usepackage{amsfonts}
\usepackage{xstring}
\usepackage[pdfencoding=auto]{hyperref}
\usepackage{cleveref}
\usepackage{caption}
\usepackage{subcaption}
\usepackage{ifthen}
\usepackage[misc]{ifsym}
\usepackage{pifont}
\usepackage{mathtools}
\usepackage{trimclip}
\usepackage{multirow}

\frenchspacing

\usepackage{algorithm}
\usepackage{algpseudocodex}
\algnewcommand{\LineComment}[1]{\State \(\triangleright\) #1}


\newtheorem{theorem}{Theorem}
\newtheorem{fact}[theorem]{Fact}
\newtheorem{observation}[theorem]{Observation}
\newtheorem{lemma}{Lemma}
\newtheorem{corollary}{Corollary}

%
%
\renewcommand{\subsection}[1]{\textbf{#1}.}
\renewcommand{\subsubsection}[1]{\textbf{#1}.}

\definecolor{okabe1}{HTML}{000000}
\definecolor{okabe2}{HTML}{E69F00}
\definecolor{okabe3}{HTML}{56B4E9}
\definecolor{okabe4}{HTML}{009E73}
\definecolor{okabe5}{HTML}{F0E442}
\definecolor{okabe6}{HTML}{0072B2}
\definecolor{okabe7}{HTML}{D55E00}
\definecolor{okabe8}{HTML}{CC79A7}

%
%



\renewcommand{\emph}[1]{\textit{\textbf{#1}}}
\clubpenalty=1000
\widowpenalty=1000
\hyphenpenalty=2000
\tolerance=1000

\let\epsilon\varepsilon



\makeatletter
\newcommand{\drawCircleAlgorithm}[3][0.3]{%
  \@ifnextchar\bgroup{\@drawCircleAlgorithm{#1}{#2}{#3}}{\@drawCircleAlgorithm{#1}{#2}{#3}{}}%
}
\newcommand{\@drawCircleAlgorithm}[4]{%
  \def\circles{#2}
  \draw[line width=#1,dashed] (0,0) circle (1);
  \foreach \x/\y/\r [count=\i] in \circles {
    \pgfmathtruncatemacro\colorindex{mod(\i-1,7)+2}
    \draw[line width=#1,okabe\colorindex] (\x,\y) circle (\r);
    \pgfmathsetmacro\scalefactor{max(0.4, min(1.2, 2.5*\r))}
    \ifnum\i<#3
      \node[okabe\colorindex,scale=\scalefactor] at (\x,\y) {\i};
    \else\ifnum\i=#3
      \node[okabe\colorindex,scale=\scalefactor] at (\x,\y) {#4};
    \fi\fi
  }
}
\makeatother

\newcommand{\qdrresults}[3]{
  $\begin{array}{r@{\,}l}
      P(n) & \leq #1 \lceil \log{n} \rceil \\
      D(n) & \leq #2 n \\
      R_\text{max} & \leq #3 \lceil \log{n} \rceil
  \end{array}$%
}

\newcommand{\qdresults}[2]{
  $\begin{array}{r@{\,}l}
      P(n) & < #1 \lceil \log{n} \rceil \\
      D(n) & \leq #2 n
  \end{array}$%
}


\makeatletter
\DeclareRobustCommand{\qed}{%
  \ifmmode 
  \else \leavevmode\unskip\penalty9999 \hbox{}\nobreak\hfill
  \fi
  \quad\hbox{\qedsymbol}}
\newcommand{\openbox}{\leavevmode
  \hbox to.77778em{%
  \hfil\vrule
  \vbox to.675em{\hrule width.6em\vfil\hrule}%
  \vrule\hfil}}
\newcommand{\qedsymbol}{\openbox}
\newenvironment{proof}[1][\proofname]{\par
  \normalfont
  \topsep6\p@\@plus6\p@ \trivlist
  \item[\hskip\labelsep\itshape
    #1.]\ignorespaces
}{%
  \qed\endtrivlist
}
\newcommand{\proofname}{Proof}
\makeatother


\setcounter{secnumdepth}{1} 

%

%
%
\renewcommand{\emph}[1]{\textbf{\textit{#1}}}


\title{The Marco Polo Problem: \\
A Combinatorial Approach to Geometric Localization}
\author{
Ofek Gila\textsuperscript{1},
Michael T. Goodrich\textsuperscript{1},
Zahra Hadizadeh\textsuperscript{2},
Daniel S. Hirschberg\textsuperscript{1},
Shayan Taherijam\textsuperscript{1}\\
\textsuperscript{1}University of California, Irvine\\
\textsuperscript{2}University of Rochester\\
\{ogila,goodrich,dhirschb,staherij\}@uci.edu,
zhadizadeh99@gmail.com
}

\begin{document}

%
%
\pagestyle{plain}

\maketitle

\begin{abstract}
We introduce and study the \emph{Marco Polo problem},
which is a combinatorial approach to geometric localization.
In this problem, we are told there are
one or more points of interest (POIs)
within distance $n$ of the origin that we wish to localize.
Given a mobile search point, $\Delta$,
that is initially at the origin,
a localization algorithm is a strategy to move $\Delta$
to be within a distance of $1$ of a POI.
In the combinatorial localization problem we study, the only tool
we can use is reminiscent of the children's game,
``Marco Polo,'' in that $\Delta$ can issue a
\emph{probe} signal out a specified distance, $d$,
and the search algorithm
learns whether or not there is a POI within distance $d$ of $\Delta$.
For example, we could imagine that POIs are
one or more hikers lost in a forest and we need to design a
search-and-rescue (SAR) strategy to find
them using radio signal probes to a response device that hikers carry.
Unlike other known localization algorithms, probe responses do not inform
our search algorithm of the direction or distance to a POI.
The optimization problem
is to minimize the number of probes and/or POI responses, as well
as possibly minimizing the distance traveled by $\Delta$.
We describe a number of efficient combinatorial Marco Polo
localization strategies
and we analyze each one in terms of the size, $n$, of the search domain.
Moreover, we derive strong bounds for the constant factors for the search costs
for our algorithms, which in some cases involve computer-assisted proofs.
We also
show how to extend these strategies to find all POIs using a simple,
memoryless search algorithm, traveling a distance that is
$\mathcal{O}(\log{k})$-competitive with the optimal traveling salesperson (TSP)
tour for $k$ POIs.
\end{abstract}

\section{Introduction}
In the children's game, ``Marco Polo,'' 
a group of children are playing in a swimming pool.
One player is chosen as ``it,''
who closes their eyes and tries to find and tag one of the other players.
The ``it'' player periodically calls out ``Macro'' and the other players who
can hear this call must respond with ``Polo.''
The ``it'' player moves based on this ``Marco-Polo'' call-and-response
protocol until getting close enough to another player to tag them,
which ends this player's turn being ``it.''
See, e.g.,~\cite{wiki:marco}.

In this paper, we introduce and study the \emph{Marco Polo problem},
which is a combinatorial approach to geometric localization
motivated from the ``Marco Polo'' children's game.
In this problem, we start with a search point, $\Delta$, 
at the origin, with one or more points of interest (POIs) at a distance $n$
from the origin and our goal is to move $\Delta$ to be within distance
$1$ of a POI.\footnote{This formulation is made without loss of generality,
   as we could just as easily normalize the search
   problem so that there is a POI within distance $1$ of $\Delta$
   and we are interested in moving $\Delta$ to be within distance $\epsilon$
   of a POI, for a given $\epsilon>0$.}
We may periodically send probes out a specified distance, $d$, 
and we learn whether or not there is a POI within distance $d$ from $\Delta$.
Optimization goals include minimizing the number of probes and minimizing
the total distance traveled by $\Delta$.


We can motivate the Marco Polo problem, for example,
from a search-and-rescue (SAR) scenario.
Suppose a hiker, who we'll call ``Alice,'' 
is lost and stationary at a point of interest (POI)
in a large forest and we would like to find her
using an efficient SAR strategy.
Assume Alice has a wireless 
device, similar to an Apple AirTag, which
can respond to probes sent from a searcher, $\Delta$.
In particular, suppose $\Delta$ can send a 
probe at a given power level,
which sends an omni-directional signal out to a known radius (depending
on the power level), and if Alice
is present inside the circle determined by this probe, then $\Delta$
will receive a positive response.
Such probes use up power, however, both for $\Delta$ and for Alice's 
tracking device; hence, the goal is to devise 
a sequence of probes that minimizes the number of probes
needed to locate Alice to a specified accuracy.
(See Figure~\ref{fig:alice}.)

\begin{figure}[hbt!]
	\centering
	\resizebox*{0.6\linewidth}{!}{\begin{tikzpicture}
    \drawCircleAlgorithm[0.3]{
        {0.3240584078787189/0.468021961196084/0.8221566712745698},
        {-0.6536636388727417/-0.3403333825180739/0.675941592121281},
        {0.19819922134273138/-0.8073917009340685/0.5557298893544654},
        {-0.6588517518809158/0.597628201790139/0.4568970359594523},
        {0.8687374815033344/-0.32278331395764914/0.3756409461996407}%
    }{5}{\ding{55}}
\end{tikzpicture}}
	\vspace*{-\medskipamount}
	\caption{An example sequence of probes for an instance of 
                the Marco Polo problem. In this
		example, the first four probes are 
		negative and the POI (marked with an \ding{55}) is found 
                to within distance~$1$ on the fifth probe.
	}
	\label{fig:alice}
	\vspace*{-\bigskipamount}
\end{figure}
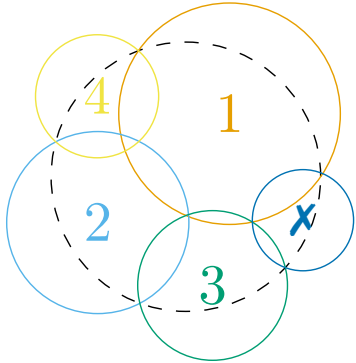

We also consider
generalizations of this problem, such as if there
were multiple POIs.
One can imagine other applications besides search and rescue
for the combinatorial searching problem,
including locating animals wearing tracking collars, 
finding radioactive sources, or identifying
anomalous readings in wireless sensor networks.

\subsection{Related Prior Work}
Although we are not familiar with any prior work
on the Marco Polo problem itself,
our work nevertheless falls into a rich area of study known
as \emph{localization algorithms}; see, e.g.,
the survey by Han, Xu, Duong, Jiang, and Hara~\cite{han2013localization}.
We discuss how our work compares to a wealth of existing prior work in an
appendix.

\subsection{Problem Definition}
In the \emph{Marco Polo problem},
there are $k\ge 1$ points of interest (POIs)
with unknown positions, with at least one that is within
a distance, $n$, of a point, $O$, called the \emph{origin},
which is the initial position of a mobile search point, $\Delta$. 
A \emph{probe}, $p(x,y,d)$, is a query that asks if there are any POIs
within distance $d$ of the current location, $(x,y)$, for $\Delta$.
The goal is to design a search strategy to 
move $\Delta$ to be within a distance of $1$ of a POI.
Given this setup, there are a number of possible constraints that 
define instances of the Marco Polo problem, including:
\begin{itemize}
\item
A search algorithm can be
\emph{incremental}, which finds all $k$ POIs one at a time,
or \emph{coordinated}, which finds multiple POIs in a coordinated
fashion.

\item
An algorithm is
\emph{memoryless} with respect to the current search
area if it%
s state is
restricted to the area determined by the previous successful probe.
\end{itemize}

There are a number of metrics we can 
use to measure the effectiveness of the strategy, including the following:
\begin{itemize}
\item
$P(n)$: the number of probes issued.
\item
$R_\text{max}$: the maximum number of times a POI responds to a probe.
\item
$D(n)$: the total distance traveled by $\Delta$.
\end{itemize}
These efficiency measures can conflict, of course, in that 
a strategy that minimizes, say, $R_\text{max}$, may have poor bounds for
$P(n)$ and/or $D(n)$.
Such a trade-off may nevertheless be worthwhile, however, such as 
in an SAR scenario where
the batteries are running out in a hiker's device and 
responses are costly.

\subsection{Our Results}
In this paper, we provide a number of efficient algorithms for solving 
instances of the Marco Polo problem.
We begin with a simple double binary-search warm-up algorithm that uses 
$2\lceil \log n\rceil + \mathcal{O}(1)$ probes,\footnote{All of the
    logarithms used in this paper are base 2.}
but which is based on unrealistic assumptions, as we discuss.
Instead, under more realistic assumptions about conditions
regarding the search space, we provide a sequence of algorithms, starting
with simple algorithms based on hexagon geometries, which we
call ``hexagonal algorithms,'' and progressing to more sophisticated
recursive strategies based on progressively shrinking probes at each
level of recursion.
This ultimately results in an algorithm that makes at most 
$3.34\lceil\log n\rceil$ probes using a monotonically spiraling search
strategy at each recursive level.
Using computer-assisted proof techniques, we then show that it is possible to 
find a POI using $2.53\lceil\log n\rceil$ probes. Although
this strategy
sacrifices the simplicity of a monotonically spiraling search,
it performs competitively with our
proven lower bound of $2.4 \lceil \log{n} \rceil$ probes for progressive
shrinking algorithms.
We also provide various algorithms that reach different trade-offs between the
number of probes made and the total distance traveled by $\Delta$, include one
algorithm that 
travels a total distance of at most $6.02 n$,
which is less than the circumference of the original search area.
We then provide a family of algorithms that are able to restrict the total
number of POI responses, $R_\text{max}$, to any desired value from 
$1$ to $\lceil \log{n} \rceil$ while minimizing the number of probes made.
Finally, we present a strategy to extend our incremental algorithms to find
all POIs while traveling a distance that is
$\mathcal{O}(\log{k})$-competitive with an optimal traveling salesperson (TSP)
tour.
We include experiments supporting all our results in an appendix.


\vspace{-\medskipamount}


\section{Finding One POI}

We describe a series of
progressively more efficient algorithms that minimize the number of probes
needed for finding a single POI.
We assume that there may be multiple POIs, either within the
search region of radius $n$ or outside of it, but we are initially
interested only in finding one of them.
Later we will discuss how to extend these algorithms to find all POIs.

\subsection{Hexagonal Algorithms}
Our first algorithms, which we call \emph{hexagonal algorithms},
are defined in terms of a tiling of
our search area with hexagons of radius $n / 2$.
There are seven such hexagons, which can each be 
probed with radius-$n / 2$
probes until a probe succeeds, which then allows us
to make a recursive call to an $n/2$-sized subproblem.
Such an algorithm will
take $7 \lceil \log{n} \rceil$ probes in the
worst case where POIs are always in the last hexagon probed
in each recursive level.
We can improve this
to $6 \lceil \log{n} \rceil$ probes by not probing the last hexagon,
since a POI \emph{must} be there if the other six probes fail.
We refer to this hexagonal algorithm as Algorithm 1.

There is a better hexagonal algorithm, however, which involves
first probing the upper two quadrants, which can be done with radius $n /
\sqrt{2}$ probes, eliminating 3 hexagons, and then probing 3 of the 4 remaining
hexagons as before.
We refer to this modified hexagonal algorithm as
Algorithm 2. See \Cref{fig:alg-1-2}.
If POIs are in one of the two larger probes, we reduce the problem less than
before, only by a factor of $\sqrt{2}$, but with fewer probes (at most 2).
This turns out to be a better tradeoff, so that in the worst case where POIs
are all in the last hexagon probed at each level
and
our algorithm makes at most $5 \lceil \log{n} \rceil$ probes.

\begin{figure}[hbt!]
	\vspace*{-\smallskipamount}
	\captionsetup[subfigure]{labelformat=empty}
	\centering
	\begin{subfigure}{0.47\linewidth}
		\centering
		\resizebox*{\linewidth}{!}{\begin{tikzpicture}
	\newcommand{\drawhexagon}[3]{
		\draw[shift={(#1,#2)},okabe2,dashdotted,opacity=0.4] (0:#3) \foreach \x in {60,120,...,360} {  -- (\x:#3) };
	}

	\draw[dashed] (0,0) circle (2);

	\def\hexradius{1}
	\def\gridradius{1}  

	\foreach \ring in {1,...,\gridradius} {
		\foreach \side in {0,...,5} {
			\foreach \pos in {0,...,\ring} {
				\pgfmathsetmacro\angle{60*\side + 30}
				\pgfmathsetmacro\x{(\ring-\pos)*\hexradius*sqrt(3)*cos(\angle) + \pos*\hexradius*sqrt(3)*cos(\angle+60)}
				\pgfmathsetmacro\y{(\ring-\pos)*\hexradius*sqrt(3)*sin(\angle) + \pos*\hexradius*sqrt(3)*sin(\angle+60)}

				\drawhexagon{\x}{\y}{\hexradius}
			}
		}
	}

	\foreach \side in {0,...,5} {
		\pgfmathsetmacro\angle{60*\side + 30}
		\pgfmathsetmacro\x{\hexradius*sqrt(3)*cos(\angle)}
		\pgfmathsetmacro\y{\hexradius*sqrt(3)*sin(\angle)}

		\draw[okabe3] (\x,\y) circle (1);
		\node[okabe3] at (\x,\y) {\LARGE \the\numexpr\side+1};
	}

	\draw[okabe7] (0,0) circle (1);
\end{tikzpicture}}
		Algorithm 1
		\caption{\qdrresults{6}{10.39}{}}
	\end{subfigure}
	\hfill
	\begin{subfigure}{0.47\linewidth}
		\centering
		\resizebox*{\linewidth}{!}{\begin{tikzpicture}
	\newcommand{\drawhexagon}[3]{
		\draw[shift={(#1,#2)},okabe2,dashdotted,opacity=0.4] (0:#3) \foreach \x in {60,120,...,360} {  -- (\x:#3) };
	}

	\draw[dashed] (0,0) circle (2);

	\def\hexradius{1}
	\def\gridradius{1}  

	\foreach \ring in {1,...,\gridradius} {
		\foreach \side in {0,...,5} {
			\foreach \pos in {0,...,\ring} {
				\pgfmathsetmacro\angle{60*\side + 30}
				\pgfmathsetmacro\x{(\ring-\pos)*\hexradius*sqrt(3)*cos(\angle) + \pos*\hexradius*sqrt(3)*cos(\angle+60)}
				\pgfmathsetmacro\y{(\ring-\pos)*\hexradius*sqrt(3)*sin(\angle) + \pos*\hexradius*sqrt(3)*sin(\angle+60)}

				\drawhexagon{\x}{\y}{\hexradius}
			}
		}
	}

	\foreach \side in {3,...,5} {
		\pgfmathsetmacro\angle{60*\side + 30}
		\pgfmathsetmacro\x{\hexradius*sqrt(3)*cos(\angle)}
		\pgfmathsetmacro\y{\hexradius*sqrt(3)*sin(\angle)}

		\draw[okabe3] (\x,\y) circle (1);
		\node[okabe3] at (\x,\y) {\LARGE \side};
	}

	\pgfmathsetmacro\st{sqrt(2)}
	\draw[okabe8] (1,1) circle (\st);
	\node[okabe8] at (1,1) {\LARGE 1};
	\draw[okabe8] (-1,1) circle (\st);
	\node[okabe8] at (-1,1) {\LARGE 2};

	\draw[okabe7] (0,0) circle (1);
\end{tikzpicture}}
		Algorithm 2
		\caption{\qdrresults{5}{8.81}{2}}
	\end{subfigure}
	\caption{\label{fig:alg-1-2}
		Two simple hexagonal algorithms.
Algorithm 1 performs probes of
		radius $n / 2$ along the center of each of the 6 outer hexagons, while
		Algorithm 2 first performs two radius $n / \sqrt{2}$ probes in
		the upper two quadrants before performing the remaining probes of Algorithm
		1.
}
\vspace*{-\medskipamount}
\end{figure}

\subsubsection{Distance Traveled}
To determine the total distance traveled by $\Delta$, $D(n)$,
in any probe-sequence algorithms,
including our hexagonal algorithms,
let us consider the
first layer of probes, i.e., those performed when the search radius is~$n$.
In particular,
consider the $k$-th probe (starting from 1)
as having radius $r_k = \rho_k n$,
where $\rho_k$ represents the proportionality factor for the size of each probe
relative to the (current) search radius.
Let $d_k$ denote the distance $\Delta$ travels to perform the $k$-th probe.
Assuming we always first succeed on our $k$-th probe, the total distance
traveled is determined by
$D(n) = d_k + D(\rho_k n)$,
where we assume that $D(n) = b_k n$, for some constant $b_k$.
Solving this, we find that if indeed the first probe that
succeeds is always the $k$-th probe, then $b_k = d_k / (1 - \rho_k)$.
Using this, we determine that:
\begin{equation}\label{eq:distance}
  D(n) \leq \max_k{\frac{d_k}{1 - \rho_k}} n.
\end{equation}
To determine a good bound for this maximum, we use a computer-assisted
proof technique to compute this value
for each algorithm.\footnote{We provide pseudocode for this algorithm in an appendix.}
%
The maximum number of responses is achieved when the POI is always inside the
largest probe, $\rho_\text{max}$, resulting in:
\begin{equation}\label{eq:max_responses}
	R_\text{max} \leq \lceil \log_{1/\rho_\text{max}}{n} \rceil = -\frac{1}{\log{\rho_\text{max}}} \lceil \log{n} \rceil.
\end{equation}

\subsection{Progressively Shrinking Probes}
As shown in Algorithm 2, we may be able to achieve better results by probing
from differently sized circles before ever receiving a positive response.
Indeed, we can do better by using
probes that get progressively smaller, so that if we spend many
probes to be able to recurse to a smaller search area,
we should at least reduce the remaining area by a larger factor.
The question remains how to choose the sizes of the probes.

Let the total number of probes required to find a POI starting with a search
area of radius $n$ be $P(n)$, and let the $k$-th probe (starting from 1) have
radius $r_k = \rho_k n$, where $\rho_k$ represents the proportionality factor for the
size of each probe relative to current search radius.
Assuming we always first succeed on our $k$-th probe, 
$P(n)$ is
determined by the recurrence relation,
$P(n) = k + P(\rho_k n)$,
where
$P(n) = c_k \lceil \log{n} \rceil$, for some constant, $c_k>0$
, and
our total number of probes will be
$k + P(\rho_k n)$, 
resulting in
\begin{align*}
	P(n) &= k + P(\rho_k n) = k + c_k \lceil \log(\rho_k n) \rceil \\
	&= k + c_k \log{\rho_k} + c_k \lceil \log{n} \rceil \\
	&= k + c_k \log{\rho_k} + P(n).
\end{align*}
Hence, $0 = k + c_k \log{\rho_k} \implies c_k = -\frac{k}{\log{\rho_k}}.$

\medskip
In the worst case, all POIs will be in the
$k$-th probe such
that the total number of probes is maximized, i.e., such that $c_k$ is
maximized, so we pick $c_k$ such that it is the same for all $k$.
This is done by setting $\rho_k = \rho_1^k$, where $\rho_1$ is the proportionality
constant of the first probe, resulting in the overall number of probes
being\footnote{The inequality arises from the omission of the last probe.}
\begin{equation}\label{eq:prog-queries}
	P(n) \leq -\frac{1}{\log{\rho_1}} \lceil \log{n} \rceil.
\end{equation}

Note that our equations for the maximum number of probes and maximum number of
responses, \Cref{eq:prog-queries,eq:max_responses}, respectively, are the
same---the latter occurring when the POI is always in the first probe.


\subsubsection{A Lower Bound for Progressive Shrinking}
With this result in mind, we can determine a lower bound on the number of probes
required to find a POI.
Because $\Delta$'s algorithm is memoryless with regards to its current search
area, it is always possible for POIs to be along the area's perimeter, so
any algorithm must at least probe the perimeter of the search area.
To maximize the perimeter coverage of each probe, we place it such that its
diameter is a chord of the circle, as shown in \Cref{fig:perimeter}, and
determine the minimum value of $\rho_1$ required to probe the entire perimeter of
the circle to be approximately 0.74915.
This results in a lower bound of $P(n) > 2.40001 \lceil \log{n} \rceil$
probes.\footnote{The exact coefficient $c$ can be determined numerically by
solving the following equation:
$\sum_{k=1}^\infty{\sin^{-1}{2^{-\frac{k}{c}}}} = \pi$, which we approximated
using Wolfram Mathematica.}

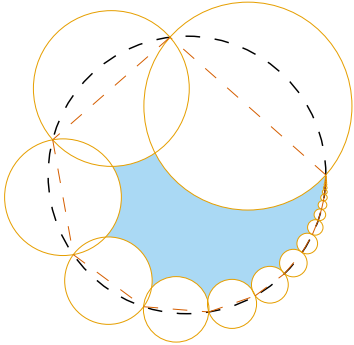
\begin{figure}[hbt!]
	\centering
	\resizebox*{0.6\linewidth}{!}{\begin{tikzpicture}
	\tikzset{
		clip even odd rule/.code={\pgfseteorule}, 
		invclip/.style={
			clip,insert path=
				[clip even odd rule]{
					[reset cm](-\maxdimen,-\maxdimen)rectangle(\maxdimen,\maxdimen)
				}
		}
	}

	\draw[line width=0.3,dashed] (0,0) circle (1);

	\def\po{0.749154715}
	
	\def\radius{\po}
	\def\figuretheta{0}

	\def\numcircles{25}

	\begin{scope}
		\clip (0,0) circle (1);

		\foreach \i in {1,...,\numcircles} {
			\pgfmathsetmacro\xo{cos(\figuretheta)}
			\pgfmathsetmacro\yo{sin(\figuretheta)}

			\pgfmathparse{\figuretheta+2*asin(\radius)}
			\global\let\figuretheta\pgfmathresult

			\pgfmathsetmacro\xt{cos(\figuretheta)}
			\pgfmathsetmacro\yt{sin(\figuretheta)}

			\pgfmathsetmacro\x{(\xo + \xt)/2}
			\pgfmathsetmacro\y{(\yo + \yt)/2}

			\clip[invclip] (\x,\y) circle (\radius);
			
			\pgfmathparse{\radius*\po}
			\global\let\radius\pgfmathresult
		}

		\fill[okabe3!50] (0,0) circle (1);
	\end{scope}

	\def\radius{\po}
	\def\figuretheta{0}

	\foreach \i in {1,...,\numcircles} {
		\pgfmathsetmacro\xo{cos(\figuretheta)}
		\pgfmathsetmacro\yo{sin(\figuretheta)}

		\pgfmathparse{\figuretheta+2*asin(\radius)}
		\global\let\figuretheta\pgfmathresult

		\pgfmathsetmacro\xt{cos(\figuretheta)}
		\pgfmathsetmacro\yt{sin(\figuretheta)}

		\pgfmathsetmacro\x{(\xo + \xt)/2}
		\pgfmathsetmacro\y{(\yo + \yt)/2}

		\draw[okabe2,line width=0.1] (\x,\y) circle (\radius);
		\draw[okabe7,line width=0.2,dashed] (\xo,\yo) to (\xt,\yt);
		
		\pgfmathparse{\radius*\po}
		\global\let\radius\pgfmathresult
	}
\end{tikzpicture}}
	\caption{\label{fig:perimeter}
		The optimal placement of progressively decreasing probes of radius
		proportional to $\rho_1^k$ in order to cover the circumference of the
		search area.
		$\rho_1$ must be at least 0.74915 to fully cover the perimeter, as shown.
		There exists uncovered search area, depicted in blue.
	}
	\vspace*{-\medskipamount}
\end{figure}

\subsubsection{Chord-Based Shrinking Algorithms}
The placement strategy for $\Delta$ used to show the lower bound of
roughly $2.4 \lceil \log{n} \rceil$ probes for progressive shrinking algorithms
is not a valid approach for an upper-bound for an algorithm, due to all
the internal uncovered area, see \Cref{fig:perimeter}.
Nevertheless, the approach of
placing diameters of each probe as chords of a
circle can work with large enough probes,
which we can determine by tuning $\rho_1$.
We refer to our next algorithm, which is a progressive shrinking algorithm,
as Algorithm 3.
In this algorithm, we numerically determine, using a computer-assisted proof,
that the minimum value of $\rho_1$ that leaves
no uncovered area is approximately 0.844,
reducing the number of probes to $P(n)  < 4.08 \lceil \log{n} \rceil$.
See \Cref{fig:alg-3-4} (left).
$\Delta$'s flight path maintains the property that it is monotonic in a counterclockwise orientation.

However, if we allow for nonmonotonic flight plans,
we can place the two largest probes side by side and alternate the next
two probes on either side, which leads to significant
improvements,
which we refer to as Algorithm 4.
This algorithm is able to further reduce the number of probes to $P(n)  < 3.54
\lceil \log{n} \rceil$, and is depicted in \Cref{fig:alg-3-4} (right).

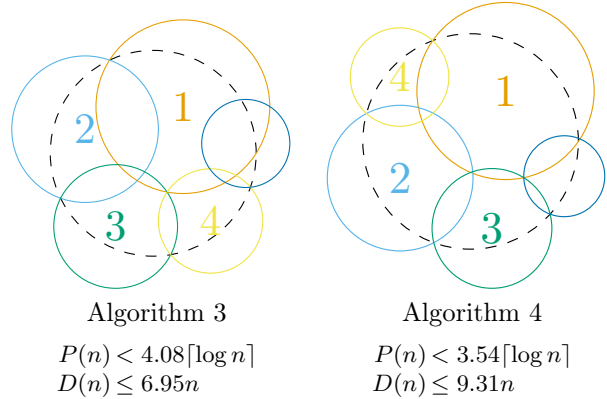
\begin{figure}[hbt!]
	\captionsetup[subfigure]{labelformat=empty}
	\centering
	\begin{subfigure}{0.48\linewidth}
		\centering
		\resizebox*{\linewidth}{!}{\begin{tikzpicture}
    \drawCircleAlgorithm[0.3]{
        {0.28789865898868494/0.4527836361234821/0.843860972560833},
        {-0.6618325909742372/0.23428465945130714/0.712101341011315},
        {-0.36476973938250307/-0.7112276461444245/0.6009145301876817},
        {0.5569297259284322/-0.6577923047806103/0.5070883198701132},
        {0.8987720218026585/0.09539463542963203/0.42791204277983247}%
    }{5}
\end{tikzpicture}}
		Algorithm 3
		\caption{\qdresults{4.08}{6.95}}
	\end{subfigure}
	\hfill
	\begin{subfigure}{0.48\linewidth}
		\centering
		\resizebox*{\linewidth}{!}{\begin{tikzpicture}
    \drawCircleAlgorithm[0.3]{
        {0.3240584078787189/0.468021961196084/0.8221566712745698},
        {-0.6536636388727417/-0.3403333825180739/0.675941592121281},
        {0.19819922134273138/-0.8073917009340685/0.5557298893544654},
        {-0.6588517518809158/0.597628201790139/0.4568970359594523},
        {0.8687374815033344/-0.32278331395764914/0.3756409461996407}%
    }{5}
\end{tikzpicture}}
		Algorithm 4
		\caption{\qdresults{3.54}{9.31}}
	\end{subfigure}
	\caption{\label{fig:alg-3-4}
		Algorithms 3 and 4 both place probes such that their diameters are
		chords of the search area circle.
		Algorithm 3 places the probes in order of decreasing size going
		counter-clockwise, while Algorithm 4 places the probes such that
		they overlap as little as possible.
	}
	\vspace*{-\bigskipamount}
\end{figure}

An optimization can be made to the total distance traveled by $\Delta$
for Algorithms~3 and~4.
If a POI is determined to be in the last area, because previous probes
in a recursive level are all negative, $\Delta$ 
can travel directly to the center of the first probe within the next recursive
layer instead of the center of the last probe, since the last probe
does not need to be performed.
Accordingly,
Algorithms~3 and~4 tradeoff the number of probes
for flight distance.
See \Cref{fig:alg-3-4}.

\subsubsection{Higher-count Monotonic-path Algorithms}
Both Algorithms 3 and 4 use very few probes (i.e., 5) at each
recursive layer,
while we know from \Cref{eq:prog-queries} that we can introduce more probes with
geometrically decreasing radii without increasing $P(n)$.
However, decreasing the value of $\rho_1$ in Algorithm 4 will not only introduce a
gap in the perimeter
but would also introduce an internal gap, which would require
another placement scheme entirely to fill.
Our next idea is to begin with one large central probe before placing the
remaining probes along the perimeter monotonically as in Algorithm~3, which
should also have good flight-path performance for $\Delta$.
Intuitively,
the large central probe greatly reduces the probe radius required to avoid
internal gaps, allowing for more probes to be placed.
Indeed, by placing the remaining probes such that their diameters are
chords of the search area, as was done in Algorithm 3, leads to Algorithm~5,
which uses up to 8 probes at each recursive level.
While this algorithm improves upon Algorithm 3, requiring only $P(n)  < 3.83
\lceil \log{n} \rceil$ probes, and improved flight distance,
it performs worse than Algorithm 4 in its total number of probes.
%
See Figure~\ref{fig:alg-5-6}.
In order to take advantage of even more probes, we observe that we cover the
circumference of the search area at a much faster rate than the circumference of
the central probe.
In other words, if we reduce the probe radii, we would still be able to cover
the search radius circumference, but we would introduce internal gaps between
the outer probes and the central probe.
Ideally, we would like the rate at which they cover the inner and the outer
circumferences to be the same, such that the
chords made with the outer and inner circles cover the same angle.
%
The geometric reasoning
is shown in
\Cref{fig:angle}.

\begin{figure}[hbt!]
	\centering
	\resizebox*{0.5\linewidth}{!}{\begin{tikzpicture}[
	every node/.style={scale=0.3},
	line width=0.1,
	label distance=-3pt,
	bl/.style={label={below left:#1}},
	al/.style={label={[label distance=-7pt]above left:#1}},
	l/.style={label={left:#1}},
	br/.style={label={below right:#1}},
	b/.style={label={below:#1}},
	sbl/.style={label={[xshift=5pt]below left:#1}},
	ar/.style={label={above right:#1}},
]

	\begin{scope}
		\clip (-0.15, -0.15) rectangle (1.1, 0.705);

		\pgfmathsetmacro\rz{1}

		\pgfmathsetmacro\ro{0.7*\rz}

		\node[bl=$O$] (O) at (0, 0) {};

		\draw[dashed] (O) -- node[midway, left] {$r_1$} (0, \ro);

		\def\starttheta{5}

		\pgfmathsetmacro\xA{\ro*cos(\starttheta)}
		\pgfmathsetmacro\yA{\ro*sin(\starttheta)}

		\node[bl=$A$] (A) at (\xA, \yA) {};

		\pgfmathsetmacro\xAp{\rz*cos(\starttheta)}
		\pgfmathsetmacro\yAp{\rz*sin(\starttheta)}

		\node[sbl=$A'$] (Ap) at (\xAp, \yAp) {};

		\draw[okabe2] (O.center) -- (Ap.center);

		\def\figtheta{18}

		\pgfmathsetmacro\xB{\ro*cos(\starttheta+2*\figtheta)}
		\pgfmathsetmacro\yB{\ro*sin(\starttheta+2*\figtheta)}

		\node[l=$B$] (B) at (\xB, \yB) {};

		\pgfmathsetmacro\xBp{\rz*cos(\starttheta+2*\figtheta)}
		\pgfmathsetmacro\yBp{\rz*sin(\starttheta+2*\figtheta)}

		\node[l=$B'$] (Bp) at (\xBp, \yBp) {};

		\draw[okabe2] (O.center) -- (Bp.center);

		\pgfmathsetmacro\xM{(\xA + \xAp)/2}
		\pgfmathsetmacro\yM{(\yA + \yAp)/2}

		\node[sbl=$M$] (M) at (\xM, \yM) {};

		\pgfmathsetmacro\dM{sqrt(\xM*\xM + \yM*\yM)}
		\pgfmathsetmacro\dOp{\dM/cos(\figtheta)}

		\pgfmathsetmacro\xOp{\dOp*cos(\starttheta+\figtheta)}
		\pgfmathsetmacro\yOp{\dOp*sin(\starttheta+\figtheta)}

		\node[al=$P_k$] (Op) at (\xOp, \yOp) {};

		\pgfmathsetmacro\rk{sqrt((\rz-\ro)/2*(\rz-\ro)/2 + \dM*\dM*tan(\figtheta)*tan(\figtheta))}

		\draw[okabe3] (Op) circle (\rk);

		\draw[okabe2] (O.center) -- (Op.center);

		\draw[okabe3,dashed] (A.center) -- node[midway, left, xshift=5pt] {$r_k$} (Op.center);

		\draw[okabe2] (M.center) -- (Op.center);

		\draw pic[draw, angle radius=0.3cm, angle eccentricity=1.5, "$\theta$", okabe1] {angle = A--O--Op};

		\draw pic[draw, angle radius=0.2cm, angle eccentricity=1.5, okabe1] {right angle = A--M--Op};


		\draw (O) circle (\rz);

		\draw (O) circle (\ro);

		\fill (O) circle (0.02);
		\fill (A) circle (0.02);
		\fill (Ap) circle (0.02);
		\fill (B) circle (0.02);
		\fill (Bp) circle (0.02);
		\fill (M) circle (0.02);
		\fill (Op) circle (0.02);
	\end{scope}
\end{tikzpicture}}
	\caption{\label{fig:angle}
		A diagram showing the relationship between the $k$-th circle,
		centered at $P_k$, the first circle, centered at $O$, and the search
		radius, also centered at $O$.
	}
	\vspace*{-\medskipamount}
\end{figure}

In particular,
we must determine at what position, $P_k$, to place the center $C_k$ of the
$k$-th probe with radius $r_k$.
For simplicity, we assume that the search area is a unit circle centered at the
origin, $O$, and that the first probe, $C_1$, has radius $r_1 < 1$.
Let $A$, $B$, $A'$, and $B'$ refer to the points where $C_k$ intersects $C_1$
and the search area, respectively, and let $M$ be the midpoint of
$\overline{AA'}$.
Covering both the inner and outer circumferences at the same rate implies that
$\angle A O P_k = \angle A' O P_k$, i.e., that $\overline{OA}$ and
$\overline{OA'}$ are colinear.
Since $\triangle A A' P_k$ is isosceles, we know that
%
%
%
%
$\overline{P_k M}$ is perpendicular to $\overline{AA'}$ so $|\overline{P_kM}| = \sqrt{r_k^2 -
(\frac{1-r_1}{2})^2}$ and consequently $\theta = \arctan(\frac{\sqrt{r_k^2 -
(\frac{1-r_1}{2})^2}}{1-\frac{1-r_1}{2}})
$.
Thus, by moving the outer probes inward, towards the origin, to their new centers
as described above, we are
able to ``turn it up to 11'' and achieve Algorithm 6.
See Figure~\ref{fig:alg-5-6}.

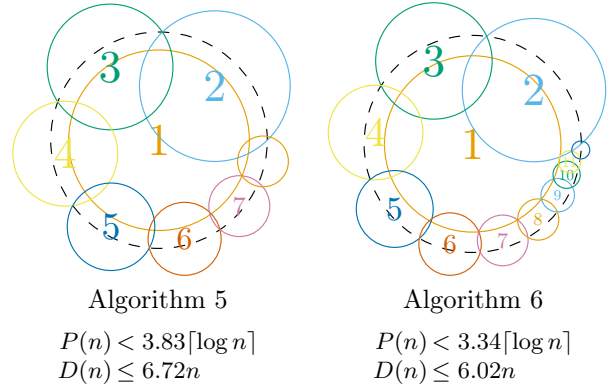
\begin{figure}[htb!]
	\captionsetup[subfigure]{labelformat=empty}
	\centering
	\begin{subfigure}{0.48\linewidth}
		\centering
		\resizebox*{\linewidth}{!}{\begin{tikzpicture}
    \drawCircleAlgorithm[0.3]{
        {0/0/0.8342831628142449},
        {0.5155444723022481/0.4997583109672561/0.6960283957553398},
        {-0.45191758342842164/0.6771821719669994/0.5806847714192898},
        {-0.8658878771785593/-0.12466365083014752/0.4844555276977519},
        {-0.44332373517785195/-0.8000676091658153/0.4041730898905245},
        {0.23657851953538434/-0.9112245982264008/0.3371948037582729},
        {0.7415164177729718/-0.609109793002744/0.2813159473639805},
        {0.9600399845076252/-0.19586529211226786/0.23469715831690727}%
    }{8}
\end{tikzpicture}}
		Algorithm 5
		\caption{\qdresults{3.83}{6.72}}
	\end{subfigure}
	\hfill
	\begin{subfigure}{0.48\linewidth}
		\centering
		\resizebox*{\linewidth}{!}{\begin{tikzpicture}
    \drawCircleAlgorithm[0.3]{
        {0/0/0.8124407481991511},
        {0.5643208368822755/0.4958455706596195/0.6600599693343965},
        {-0.3572231106256195/0.7647334661094909/0.5362596153423458},
        {-0.8939478838207515/0.10507543879251385/0.4356791631177244},
        {-0.7179732867681129/-0.599353148346416/0.3539635052581441},
        {-0.2082732502738843/-0.9229095419396449/0.28757437504712074},
        {0.27639770841541267/-0.8891745432525523/0.23363714042618605},
        {0.6027932264816553/-0.6964993285330394/0.18981633317496074},
        {0.782929757696528/-0.4724868908724668/0.15421452374508446},
        {0.8646018242986668/-0.2839050168222197/0.12529016305463217},
        {0.8920706471869477/-0.16435352894045435/0.10179083381409901},
        {0.9950262135480192/-0.05553112863558944/0.08269902118374205}%
    }{12}
\end{tikzpicture}}
		Algorithm 6
		\caption{\qdresults{3.34}{6.02}}
	\end{subfigure}
	\caption{\label{fig:alg-5-6}
		Algorithms 5 and 6 both perform
		probes counter-clockwise along the
		circumference of a central probe.
		Algorithm 5 places the probes so that diameters are
		chords of the search area, while
		Algorithm 6 
		balances the coverage rate of the inner and outer circumferences.
		Out of all our algorithms, $\Delta$ travels the least
		distance (in the worst case) using Algorithm 6.
	}
	\vspace*{-\medskipamount}
\end{figure}

\subsubsection{Darting Non-Monotonic Algorithms}
Up to this point, our best two algorithms for minimizing
$P(n)$ are Algorithm~4, which is non-monotonic and uses its
few probes very efficiently, and Algorithm~6,
which is monotonic with a counterclockwise spiral of probes such that,
despite having significant overlap, it is able to squeeze in nearly three
times as many probes and achieve better performance.
The question remains whether it is possible to achieve even better query
performance at the expense of monotonicity, giving up on optimizing
$\Delta$'s flight distance so as to achieve even better probe complexity.
We refer to such algorithms as being \emph{darting algorithms}
and discuss them in an appendix.
The best two darting algorithms are highlighted in
\Cref{fig:alg-5-6}.

\begin{figure}[hbt!]
	\captionsetup[subfigure]{labelformat=empty}
	\centering
	\begin{subfigure}{0.48\linewidth}
		\centering
		\resizebox*{\linewidth}{!}{\begin{tikzpicture}
    \drawCircleAlgorithm[0.1]{
        {0.3766095638275147/0.4845356542723805/0.78955078125},
        {-0.6275047422486887/-0.46649990626374743/0.6233904361724854},
        {0.22055685720357815/-0.8420780260559434/0.49219840590376407},
        {-0.8363946658436578/0.38655122619585/0.38861563591132153},
        {0.8405974221134691/-0.4463745438181471/0.30683177893974944},
        {-0.4600705903487553/0.8541928925132176/0.2422592707742065},
        {0.08366242225629675/-0.23730948262677515/0.19127599650483001},
        {-0.3797354768401737/0.1731748713449593/0.15102211247476083},
        {0.4914502254478055/-0.35144440898159507/0.11923962689047278},
        {-0.4652119846895424/0.5499473330977886/0.09414574056733128},
        {0.9318232713465783/-0.12293596444208141/0.07433284301629624},
        {0.3311617259784244/-0.3349595104849733/0.0586895542760503},
        {-0.43723848007056154/0.34499464742122243/0.04633838342986979},
        {-0.4359177913258304/0.4247192866145801/0.03658650683891575},
        {0.983269323540918/-0.06016623510468254/0.028886905057874397},
        {0.2576485140468424/-0.3384791427308925/0.022807678456339308},
        {0.9872487090293736/-0.022826769121943327/0.018007820343701495},
        {0.27047360937857473/-0.30859021040107315/0.014218088620979157},
        {-0.4184952978014185/0.46201937613118615/0.01122590297857583},
        {-0.41254822050693163/0.38983550014736806/0.008863420466971248},
        {1.0018747269313437/-0.00674233527601123/0.006998120554244389},
        {-0.40476827646332914/0.378147326534274/0.00552537155088534},
        {-0.4220866421019454/0.3928031992667419/0.004362561424698044},
        {-0.43134390662214/0.46376467139943706/0.003444463781121454},
        {0.2816293232336179/-0.2991930013893405/0.002719579069371773},
        {-0.4086454225500263/0.3996048490523598/0.0021472457788936313}%
    }{26}
\end{tikzpicture}}
		Algorithm 7
		\caption{\qdresults{2.93}{25.8}}
	\end{subfigure}
	\hfill
	\begin{subfigure}{0.48\linewidth}
		\centering
		\resizebox*{\linewidth}{!}{\begin{tikzpicture}
    \drawCircleAlgorithm[0.1]{
        {0.31890822598683777/0.338730967237933/0.7606738281250001},
        {-0.5717205105126045/-0.3295360766189852/0.5786246727943422},
        {0.2358896072860885/-0.7299812819467005/0.4401446449020478},
        {-0.6641982506120474/0.4760614019924518/0.3348065119663595},
        {0.7748605073639385/-0.4180019802783681/0.2546785511386293},
        {-0.38010699353173694/0.8400219715055867/0.19372730843594976},
        {-0.24264202950615982/-0.9162668190962578/0.14736329332032652},
        {-0.9447256060355724/0.19353492303514935/0.11209540045508003},
        {0.9564496604536843/-0.16724250217220732/0.0852680373793706},
        {-0.17379964403490622/0.9601146741154272/0.06486116441007143},
        {0.6965819800911245/-0.6900956647420202/0.049338190228454044},
        {0.9807288035330222/-0.057222601157597974/0.03753027003383761},
        {-0.4053562184114525/-0.8968183725429388/0.028548294177204232},
        {-0.09750163160280584/0.9855966208641711/0.02171594021821259},
        {-0.9896472520894173/0.07824473838396612/0.01651874737712142},
        {-0.5759093376545397/0.8075517569129109/0.012565378803184755},
        {0.9973804732648248/-0.015560814119356059/0.00955815479605928},
        {-0.07195916029317578/0.9948692359094679/0.0072706381985297415},
        {-0.4361037554320548/-0.896194640256324/0.0055305841913874726},
        {-0.5902089700074181/0.8047294030747743/0.004206970648630317},
        {-0.44349277695179995/-0.8947145223662323/0.0032001324681031375},
        {0.9987003913562287/-0.004291986905113787/0.0024342570150191183},
        {-0.06365138400382925/0.9970901851237663/0.0018516756022547286},
        {-0.9972462458793971/0.06274660402702736/0.0014085211688127694},
        {0.6764337025042138/-0.735921194086661/0.0010714251894759087},
        {0.9997523067012348/-0.0014552409274862232/0.000815005100428193},
        {-0.06151462292102633/0.9978192624424138/0.0006199530496841137},
        {-0.5945955566037345/0.8037645786621477/0.0004715820595609831},
        {-0.060637739445547234/0.9980488218135815/0.00035872013052132485},
        {0.9998646003020775/-0.0004430020966700865/0.00027286901490915585},
        {0.5211519793965814/-0.39467171021677755/0.00020756431814764528},
        {-0.4468097834519504/-0.8945818241012953/0.00015788874446752475},
        {0.9999950750352506/-8.911522609467148e-05/0.00012010183567196199}%
    }{33}
\end{tikzpicture}


		Algorithm 8
		\caption{\qdresults{2.53}{45.4}}
	\end{subfigure}
	\caption{\label{fig:alg-7-8}
		Algorithms~7 and~8 both use computer-assisted probe placement to
		efficiently cover the search area. Algorithm~7 begins with
		Algorithm~4,
		removing the final probe, while Algorithm~8 uses a differential
		evolution algorithm to
		place the initial six probes.
		The remaining probes are placed greedily as discussed in an appendix.
		Algorithm 8 achieves our best probe results.
	}
	\vspace*{-\medskipamount}
\end{figure}

\begin{theorem}
Progressive shrinking algorithms for $\Delta$ searching a circular region
of radius $n$
have a lower bound of $2.4\lceil\log n\rceil$ for $P(n)$, and
we can achieve upper bounds as shown in Figures~\ref{fig:alg-1-2},
\ref{fig:alg-3-4},
\ref{fig:alg-5-6},
and~\ref{fig:alg-7-8}.
\end{theorem}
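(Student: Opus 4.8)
The plan is to prove the two halves separately: the $2.4\lceil\log n\rceil$ lower bound for progressive shrinking algorithms, and the per-algorithm upper bounds depicted in the figures.

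\emph{Lower bound.} I would argue as in the discussion preceding the theorem. Fix any progressive shrinking algorithm and consider the probes it issues at the top recursive level, with proportionality factors $\rho_1\ge\rho_2\ge\cdots$ relative to the current search radius $n$. Since such an algorithm is memoryless with respect to its search region, an adversary may place a POI anywhere on the boundary circle, so the union of these probes must cover the whole perimeter. A single probe of factor $\rho$ covers an arc of central angle at most $2\arcsin\rho$, with equality exactly when its diameter is a chord of the circle (\Cref{fig:perimeter}); hence $\sum_k\arcsin\rho_k\ge\pi$. Writing $P(n)=c\lceil\log n\rceil$ and unrolling $P(n)=j+P(\rho_j n)$ for the index $j$ of the first successful probe (the memoryless recursion descends into a disk of radius exactly $\rho_j n$), the adversary can force $c\ge\max_j\bigl(-j/\log\rho_j\bigr)$. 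Over all feasible schedules this maximum is minimized by the geometric one, $\rho_j=\rho_1^{\,j}$, for which it equals $-1/\log\rho_1$; covering the perimeter then forces $\rho_1\ge\rho^{*}$, where $\rho^{*}\in(0,1)$ is the unique root of $\sum_{k\ge1}\arcsin\!\bigl((\rho^{*})^{k}\bigr)=\pi$. Solving this one-dimensional transcendental equation numerically (as in the footnote) gives $\rho^{*}\approx0.74915$, hence $c\ge-1/\log\rho^{*}\approx2.40001>2.4$.

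\emph{Upper bounds.} I would handle Algorithms~1--8 one at a time, each needing two checks: (i) \emph{correctness} --- that the probes issued at a level with search radius $n$ cover the entire radius-$n$ disk, so that some probe must succeed and the recursion into radius $\rho_j n$ is well defined; and (ii) \emph{cost} --- substituting the probe schedule into \Cref{eq:prog-queries} for $P(n)$, into \Cref{eq:distance} for $D(n)$, and into \Cref{eq:max_responses} for $R_\text{max}$, and then evaluating the indicated maxima. For the hexagonal Algorithms~1 and~2, correctness is the elementary fact that seven radius-$n/2$ hexagons tile a regular hexagon containing the disk (and the two radius-$n/\sqrt2$ quadrant probes of Algorithm~2 cover the three hexagons they replace), while the coefficients $6$ and $5$ come from omitting the last, logically forced probe. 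For the chord-based progressive algorithms~3--6 and the computer-placed Algorithms~7--8, correctness reduces to certifying that, with the probe radii and centers listed in the figures, the union of the level-$n$ probes contains the radius-$n$ disk; this is exactly where the computer-assisted proof (pseudocode in the appendix) is used. For Algorithm~6 I would additionally verify the identity $\theta=\arctan\!\bigl(\sqrt{r_k^2-((1-r_1)/2)^2}\,/\,(1-(1-r_1)/2)\bigr)$ from \Cref{fig:angle}, which is what equalizes the angular coverage rates of the inner and outer circumferences and lets one fit in the extra probes.

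\emph{Main obstacle.} The crux is check~(i) for the elaborate configurations, especially Algorithms~5--8 (whose centers, for 7--8, are produced by a numerical optimizer): one must show they leave no sliver of the disk uncovered. A bare grid sweep is not a proof, so the computer-assisted argument must carry a rigorous error term --- for instance, cover the disk with small cells and confirm that each cell, inflated by its own diameter, lies inside some probe, equivalently bound the modulus of continuity of $x\mapsto\min_k(\|x-C_k\|-r_k)$ over the disk. Once coverage is certified, extracting the constants $3.34$, $2.53$, $6.02$, and the rest from the maxima in \Cref{eq:distance,eq:prog-queries,eq:max_responses} is routine finite computation, and the lower-bound side is just the transcendental equation above solved to sufficient precision.
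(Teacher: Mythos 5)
Your proposal is correct and follows essentially the same route as the paper: the lower bound via the perimeter-coverage argument with chord-maximal arcs and the geometric schedule $\rho_k=\rho_1^k$ forcing $\rho_1\ge 0.74915$, and the upper bounds by certifying disk coverage for each algorithm's probe placement and reading off the constants from the recurrences for $P(n)$, $D(n)$, and $R_\text{max}$. Your added remarks on making the computer-assisted coverage checks rigorous (bounding the modulus of continuity of the signed distance to the probe union) elaborate on, rather than depart from, the paper's argument.
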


\subsubsection{Reducing The POIs' Responses}
Our exploration has so far focused on reducing the total number of
probes, $P(n)$, required to find a POI.
In this section, we consider the case where POIs
are limited in their number of
responses $R_\text{max}$ to probes, e.g., due to battery constraints.
Namely, consider a scenario where $R_\text{max}$ is a fixed value where
$R_\text{max} \ge 1$.
Our goal is to design a search strategy that minimizes the number of
probes, $P(n)$, while ensuring that the number of responses from the POIs
does not exceed $R_\text{max}$.
Recalling from \Cref{eq:max_responses}, the amount of responses is determined by
the size of the largest probe at each recursive layer, so there is no benefit
to using
differently sized probes.
We present a family of hexagonal algorithms which probe the search space using
hexagonal lattices with $L$ layers of hexagonal rings.
See \Cref{fig:hex-lattice}.

\begin{figure}[hbt!]
    \centering
    \resizebox*{0.5\linewidth}{!}{\begin{tikzpicture}
    \newcommand{\drawtophexagon}[3]{
        \draw[shift={(#1,#2)},okabe2,dashdotted] (0:#3) \foreach \x in {60,120,180} { -- (\x:#3) };
    }

    \newcommand{\drawbottomleftside}[3]{
        \draw[shift={(#1,#2)},okabe2,dashdotted] (180:#3) -- (240:#3);
    }

    \newcommand{\drawbottomhexagon}[3]{
        \draw[shift={(#1,#2)},okabe2,dashdotted] (240:#3) -- (300:#3);
    }

    \newcommand{\drawbottomrightside}[3]{
        \draw[shift={(#1,#2)},okabe2,dashdotted] (300:#3) -- (360:#3);
    }

    \def\hexradius{1}
    \def\gridradius{3}  
    \def\centerX{0}
    \def\centerY{0}



    \drawtophexagon{\centerX}{\centerY}{\hexradius}

    \foreach \ring in {1,...,\gridradius} {
        \foreach \side in {0,...,5} {
            \foreach \pos in {0,...,\ring} {
                \pgfmathsetmacro\angle{60*\side + 30}
                \pgfmathsetmacro\x{(\ring-\pos)*\hexradius*sqrt(3)*cos(\angle) + \pos*\hexradius*sqrt(3)*cos(\angle+60)}
                \pgfmathsetmacro\y{(\ring-\pos)*\hexradius*sqrt(3)*sin(\angle) + \pos*\hexradius*sqrt(3)*sin(\angle+60)}


                \drawtophexagon{\x}{\y}{\hexradius}
                
                \ifnum\ring=\gridradius
                    \ifnum\side=2
                        \drawbottomleftside{\x}{\y}{\hexradius}
                    \fi
                    \ifnum\side=3
                        \drawbottomleftside{\x}{\y}{\hexradius}
                    \fi
                    \ifnum\side=4
                        \drawbottomleftside{\x}{\y}{\hexradius}
                    \fi
                    
                    \ifnum\side=2
                        \ifnum\pos=\ring
                            \drawbottomhexagon{\x}{\y}{\hexradius}
                        \fi
                    \fi
                    \ifnum\side=3
                        \drawbottomhexagon{\x}{\y}{\hexradius}
                    \fi
                    \ifnum\side=4
                        \drawbottomhexagon{\x}{\y}{\hexradius}
                    \fi
                    
                    \ifnum\side=4
                        \drawbottomrightside{\x}{\y}{\hexradius}
                    \fi
                    \ifnum\side=5
                        \drawbottomrightside{\x}{\y}{\hexradius}
                    \fi
                \fi

                \ifnum\side=2
                    \ifnum\pos=0
                        \node[scale=3] at (\x, \y) {\number\numexpr\ring+1\relax};
                    \fi
                \fi
            }
        }
    }




    \pgfmathsetmacro\r{\hexradius}

    \pgfmathsetmacro\xo{\r*cos(60)}
    \pgfmathsetmacro\yo{\r*sin(60)}

    \pgfmathsetmacro\r{5*\hexradius}

    \draw[okabe3,line width=1] (0, 0) circle (\r);

    \pgfmathsetmacro\x{\r*cos(60)}
    \pgfmathsetmacro\y{\r*sin(60)}

    \draw[okabe4,line width=2,opacity=0.5] (0, 0) to node[pos=0.45,right,scale=4,xshift=-1.5pt,opacity=0.8] {$r$} (\x, \y);

    \fill[okabe7] (\x, \y) circle (0.15);
    \node[scale=3] at (\centerX, \centerY) {1};
\end{tikzpicture}}
    \caption{\label{fig:hex-lattice}
      A hexagonal lattice with $L = 4$ layers of rings%
      .
    }
    \vspace*{-\medskipamount}
\end{figure}

Increasing the number of rings increases the number of hexagons per lattice,
which in turn reduces each hexagon's size.
These smaller hexagons require smaller probes, reducing the number of responses
from the POI but increasing the (worst case) total number of probes.
Algorithm 1 can be thought of as one such algorithm where $L = 2$.
We describe our family of algorithms by the following routine:

\crefname{enumi}{step}{steps}
\Crefname{enumi}{Step}{Steps}

\begin{enumerate}
    \item \label{step:lattice} Cover the search area with an $L$-layer 
    lattice.

    \item \label{step:probe} Sequentially probe each hexagon in the lattice
        until receiving a positive response.

    \item \label{step:repeat} Repeat
    until
    reaching an area with
    radius 1.
\end{enumerate}

\begin{theorem} \label{thm:hexagonal-family}
    If a POI is only allowed to respond at most $1 \leq R_\text{max} \leq
    \lceil \log{n} \rceil$ times, then a hexagonal algorithm is able to find them
    using at most
    \begin{equation}
        P(n) \leq 6 R_\text{max} \binom{\lceil \frac{2n^{\frac{1}{R_\text{max}}}+2}{3}\rceil}{2} \text{ probes,}
    \end{equation}
    by using a hexagonal lattice with
    \begin{equation} \label{eq:hexagonal-rings}
        L \geq \lceil \frac{2n^{\frac{1}{R_\text{max}}}+2}{3} \rceil \text{ layers.}
    \end{equation}
\end{theorem}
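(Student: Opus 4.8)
The plan is to present the routine of Steps~1--3 as a recursion whose depth is controlled by $R_\text{max}$. One level of the recursion is a \textit{phase}: it probes the non-central hexagons of an $L$-layer lattice, it makes any POI respond at most once, and it shrinks the search radius by a fixed factor $\frac{2}{3L-2}$. The probe count is then at most $R_\text{max}$ times the number of non-central hexagons, which is $6\binom{L}{2}$, and the stated lower bound on $L$ is exactly what makes $R_\text{max}$ phases suffice to reach radius $1$. The only nontrivial ingredient is a covering estimate for hexagonal lattices, which I would prove first.

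I would show that an $L$-layer lattice of hexagons with circumradius $\rho$ covers the closed disk of radius $\frac{3L-2}{2}\rho$ about the origin. The heart of it is the estimate that a hexagon of the tiling at combinatorial distance $d$ from the central hexagon (i.e., $d$ steps in the hexagon-adjacency graph) has center $C$ with $|C|\ge\frac{3d}{2}\rho$. Writing $C=a\mathbf{e}_1+b\mathbf{e}_2$ in the lattice basis and setting $s=a+b$, $t=a-b$, one has $d=\max\!\left(\tfrac{|s|+|t|}{2},\,|s|\right)$ and $|C|^2=\tfrac{3\rho^2}{4}\,(3s^2+t^2)$, so $|C|\ge\tfrac{3d}{2}\rho$ reduces to the elementary inequalities $(3|s|-|t|)^2\ge0$ (case $|t|\ge|s|$) and $t^2\ge0$ (case $|t|\le|s|$). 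Since every point of a hexagon is within its circumradius $\rho$ of the center, a point $x$ lying outside all $L$ layers satisfies $|x|\ge|C|-\rho\ge\tfrac{3L-2}{2}\rho$; moreover equality can hold only when $x$ is the inward vertex of an on-axis distance-$L$ hexagon, which is also a vertex of two distance-$(L-1)$ hexagons, so such an $x$ is after all inside the lattice. Hence every $x$ with $|x|\le\tfrac{3L-2}{2}\rho$ lies in some hexagon of the lattice, and taking $\rho=\tfrac{2n}{3L-2}$ makes the $L$-layer lattice cover the radius-$n$ search region.

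Next I would analyze a single phase. We probe the $3L(L-1)=6\binom{L}{2}$ non-central hexagons one at a time, each probe of radius $\rho$ centered at the hexagon's center, stopping at the first positive response; because we stop there, every POI responds at most once during the phase. If some probe is positive, the POI lies in that probe's radius-$\rho$ disk and we recurse on it; if all non-central probes are negative, the POI lies in no non-central hexagon (each hexagon is contained in its own probe), hence in the central hexagon, hence within distance $\rho$ of the origin, and we recurse on the central radius-$\rho$ disk with no further probe. Either way the phase uses at most $6\binom{L}{2}$ probes and replaces the search radius $r$ by $\rho=\tfrac{2}{3L-2}\,r$, i.e., divides it by $m:=\tfrac{3L-2}{2}$.

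Finally I would close the recursion: after $j$ phases the radius is $n\,m^{-j}$, which is at most $1$ as soon as $m^{R_\text{max}}\ge n$, i.e., $m\ge n^{1/R_\text{max}}$, and the hypothesis $L\ge\lceil\tfrac{2n^{1/R_\text{max}}+2}{3}\rceil$ is exactly the inequality $\tfrac{3L-2}{2}\ge n^{1/R_\text{max}}$. Hence at most $R_\text{max}$ phases run, any POI responds at most $R_\text{max}$ times, and the total number of probes is at most $R_\text{max}\cdot6\binom{L}{2}$, as claimed; as a sanity check, $L=2$ (so $\rho=n/2$) recovers Algorithm~1 with $6\lceil\log n\rceil$ probes. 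The main obstacle is the covering lemma and nothing else --- in particular, pinning down the sharp constant $\tfrac{3L-2}{2}$ rather than a weaker bound, and the short vertex argument needed to put the boundary circle (not merely the open disk) inside the lattice.
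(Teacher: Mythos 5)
Your proof is correct, and at the level of the theorem itself it is the same argument as the paper's: cover the radius-$r$ region with an $L$-layer lattice of hexagons of side length $s=\frac{2r}{3L-2}$, probe all non-central hexagons (at most $6\binom{L}{2}$ of them, by the centered-hexagonal-number count) until a positive response, recurse on a radius-$s$ disk either way, and choose $L$ so that $\left(\frac{3L-2}{2}\right)^{R_\text{max}}\ge n$, which is exactly the stated ceiling. Where you genuinely diverge is in the covering lemma, which you rightly flag as the only substantive ingredient. The paper proves it by walking from the origin along a single diagonal direction, tallying the contribution of each layer ($s$ for even layers, $2s$ for odd layers past the first) to conclude $r=(\frac{3L}{2}-1)s$, and implicitly asserts that this direction realizes the closest boundary point; it also relegates the odd-$L$ case to a footnote. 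Your argument instead bounds $|C|\ge\frac{3d}{2}\rho$ for the center of \emph{any} hexagon at combinatorial distance $d$ via the lattice-coordinate computation $|C|^2=\frac{3\rho^2}{4}(3s^2+t^2)$ and two elementary square inequalities, then concludes that any uncovered point has $|x|\ge|C|-\rho\ge\frac{3L-2}{2}\rho$, with a separate vertex argument to absorb the boundary circle. This buys uniformity over all directions and all parities of $L$ in one stroke, and is the more rigorous of the two proofs; the paper's version buys brevity and a picture one can check against \Cref{fig:hex-lattice}. The rest of your write-up (one response per phase, the $L=2$ sanity check recovering Algorithm~1) matches the paper's accounting exactly.
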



\begin{fact} \label{fact:circumcircle}
    A hexagon with side length $s$ is covered by a circumscribed circle with
    radius $s$.
\end{fact}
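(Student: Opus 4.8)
The plan is to establish two things: that the circle passing through the six vertices of a regular hexagon of side length $s$ has radius exactly $s$, and that the interior disk of this circle covers the entire hexagon. The first is a metric computation and the second is a convexity argument.

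First I would place the hexagon with its center at the origin $O$ and write its six vertices as equally spaced points $v_j = (R\cos(j\cdot 60^\circ),\, R\sin(j\cdot 60^\circ))$ for $j = 0,\dots,5$, where $R$ denotes the (as-yet-unknown) circumradius, i.e.\ the common distance from $O$ to each vertex. These are by construction the vertices of a regular hexagon, whose side length is the distance between consecutive vertices. Computing $|v_0 - v_1|$ directly yields $\sqrt{(R - R/2)^2 + (R\sqrt{3}/2)^2} = \sqrt{R^2/4 + 3R^2/4} = R$, so the side length satisfies $s = R$; equivalently, the circumradius equals $s$. The same conclusion follows from the classical decomposition of the regular hexagon into six triangles meeting at $O$: each has central angle $360^\circ/6 = 60^\circ$ and is isosceles with both legs of length $R$, which forces it to be equilateral with all sides equal to $R = s$. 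Either route establishes that every vertex lies at distance exactly $s$ from $O$, so the circumscribed circle has radius $s$.

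Having fixed the radius, I would finish the covering claim by convexity. The closed disk of radius $s$ centered at $O$ is a convex set, and it contains all six vertices, which lie on its boundary. Since the hexagon is precisely the convex hull of its vertices, and any convex set containing a finite point set also contains the convex hull of that set, the entire hexagon is contained in the disk. Hence the circumscribed circle of radius $s$ covers the hexagon, as claimed.

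There is essentially no hard step here, as the statement is elementary plane geometry; the main point meriting care is the interpretation of the phrase \emph{covered by a circumscribed circle}. One must verify not only the metric fact that the circumradius equals $s$, but also that the hexagon is \emph{contained} in the disk rather than merely inscribed at its vertices. The latter is the only substantive ingredient, and it is supplied cleanly by the convexity of the disk together with the fact that the hexagon is the convex hull of its vertices.
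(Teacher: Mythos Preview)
Your proof is correct and complete. The paper states this as a Fact without proof, treating it as elementary plane geometry, so your argument via the circumradius computation and convexity supplies strictly more than the paper does.
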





\begin{lemma} \label{lem:hex-lattice-side-length-circle}
    A circle with radius $r$ can be covered by an $L$-layer lattice of hexagons
    with side length $s = \frac{2r}{3L-2}$.
\end{lemma}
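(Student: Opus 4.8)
The plan is to sidestep any direct analysis of the jagged, non-convex boundary of the union $U$ of the lattice's hexagons and instead use a simple membership test: a point near the lattice centre lies in some tiling hexagon whose \emph{centre} is near the lattice centre, and any tiling hexagon whose centre is near enough to the lattice centre must belong to the $L$-layer lattice. First I would normalise by placing the central hexagon at the origin $O$ (the orientation is irrelevant since only Euclidean distances matter), and record the two elementary facts needed about side-$s$ hexagons: each is contained in the disk of radius $s$ about its centre (\Cref{fact:circumcircle}), and the centres of two adjacent hexagons lie at distance $\sqrt3\,s$ (twice the apothem).

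Next I would introduce lattice coordinates: pick displacement vectors $\vec u,\vec v$ from $O$ to the centres of two \emph{adjacent} hexagons that subtend a $60^\circ$ angle, so that $|\vec u|=|\vec v|=\sqrt3\,s$ and $\vec u\cdot\vec v=\tfrac32 s^2$. Then the tiling's hexagon centres are exactly the points $j\vec u+k\vec v$ with $j,k\in\mathbb Z$, the $L$-layer lattice is the subset with $\max(|j|,|k|,|j+k|)\le L-1$, and expanding the norm gives $|j\vec u+k\vec v|^2=3s^2(j^2+jk+k^2)=\tfrac32 s^2\bigl(j^2+k^2+(j+k)^2\bigr)$. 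The crux is then the estimate that a tiling hexagon \emph{not} in the $L$-layer lattice has centre at distance at least $\tfrac{3L}{2}s$ from $O$: writing $c:=-(j+k)$ we have $j+k+c=0$, the hypothesis says one of $|j|,|k|,|c|$ is $\ge L$, and three reals summing to zero with one coordinate of magnitude $\ge L$ satisfy $j^2+k^2+c^2\ge\tfrac32 L^2$ (the minimum, with one coordinate fixed, being attained when the other two are equal), so the centre is at distance $\ge s\sqrt{3\cdot\tfrac34 L^2}=\tfrac{3L}{2}s$.

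Finally I would close the argument: for any point $x$ with $\|x-O\|<r:=\tfrac{3L-2}{2}s$, a tiling hexagon $H$ containing $x$ has its centre within $s$ of $x$, hence within $r+s=\tfrac{3L}{2}s$ of $O$, so by the estimate $H$ lies in the $L$-layer lattice and covers $x$. Thus $U$ contains the open disk of radius $r$ about $O$, and since $U$ is a finite union of closed hexagons it is closed, hence it also contains the closed disk of radius $r$; as $r=\tfrac{3L-2}{2}s$ is equivalent to $s=\tfrac{2r}{3L-2}$, this is the claim.

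I expect the only real pitfall to be the choice of approach rather than any individual computation. The natural-looking alternative --- locating the point of $\partial U$ nearest $O$ --- drags one through the zigzag structure of $\partial U$ (whose closest points are the re-entrant corners in the middle of each of the six sides of the hexagonal region; one checks that the line through each such row of corners is at distance exactly $\tfrac{3L-2}{2}s$, then handles the six outer corners separately and observes the bound is attained only for even $L$). That route is tight but markedly longer than the centre-distance argument above, which is the one I would write up.
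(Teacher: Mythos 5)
Your proof is correct, but it takes a genuinely different route from the paper's. The paper argues directly about the boundary of the union: it identifies the closest boundary point as lying along the ray through a vertex of the central hexagon, and tallies the contribution of each of the $L$ layers along that ray ($s$ for the first and for each even-numbered layer, $2s$ for the remaining odd-numbered ones), getting $r=(\tfrac{3L}{2}-1)s$; it carries this out only for even $L$ and relegates odd $L$ to a footnote asserting the bound ``only improves.'' You instead avoid the boundary entirely via a membership test: any point at distance $<r$ lies in a tiling hexagon whose centre is within $s$ of it (Fact~\ref{fact:circumcircle}), and your lattice-coordinate computation $|j\vec u+k\vec v|^2=\tfrac32 s^2\bigl(j^2+k^2+(j+k)^2\bigr)$ together with the inequality for three reals summing to zero shows every hexagon outside the $L$-layer lattice has centre at distance at least $\tfrac{3L}{2}s$, so the containing hexagon must be in the lattice. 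What your version buys is uniformity in $L$ (no even/odd case split) and full rigor at the one place the paper hand-waves, namely the identification of the nearest boundary point; it even recovers the paper's footnote, since your bound is attained exactly when $k=c=-j/2$ is an integer, i.e.\ for even $L$. What the paper's version buys is a short, picture-driven argument that reads directly off \Cref{fig:hex-lattice}. Your closing remark about the zigzag boundary is precisely the route the paper takes, and your diagnosis of why it is the longer path to make airtight is accurate.
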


\begin{proof}
See appendix.
\end{proof}


\begin{fact} \label{fact:centered-hexagonal}
    An $L$-layer
    lattice has $1 + 6 \binom{L}{2}$ hexagons, determined
    by the $L$-th centered hexagonal number.
\end{fact}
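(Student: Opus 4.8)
The plan is to decompose the $L$-layer lattice into concentric hexagonal rings indexed by combinatorial distance from the center, count the hexagons in each ring, and sum. I write ring $0$ for the single central hexagon and ring $k$ (for $1 \le k \le L-1$) for the set of hexagons whose lattice distance from the center equals $k$; an $L$-layer lattice is then exactly the union of rings $0$ through $L-1$, so that $L=1$ is the bare center. This matches \Cref{fig:hex-lattice}, in which the $L=4$ lattice consists of the central hexagon together with three surrounding rings.

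The key step is the claim that ring $k$ contains exactly $6k$ hexagons for every $k \ge 1$. To see this, observe that the outer boundary of ring $k$ is itself a large regular hexagon whose six sides each carry a straight run of lattice cells. Walking around the ring, each of the six sides contributes $k+1$ hexagons counting its two endpoints, but each corner hexagon is shared with its neighboring side; discounting the six corners once gives $6(k+1)-6 = 6k$ distinct hexagons. This is precisely the generation order used to draw \Cref{fig:hex-lattice}, where for ring $k$ the six sides are traversed with $k+1$ positions each, the final (corner) position of one side coinciding with the initial position of the next.

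Granting the per-ring count, the total number of hexagons is
\begin{equation*}
  1 + \sum_{k=1}^{L-1} 6k = 1 + 6\sum_{k=1}^{L-1} k = 1 + 6\binom{L}{2},
\end{equation*}
using $\sum_{k=1}^{L-1} k = \binom{L}{2}$. Equivalently, one may argue by induction on $L$: the base case $L=1$ gives $1 = 1 + 6\binom{1}{2}$, and passing from $L$ to $L+1$ appends the outermost ring of $6L$ hexagons, so the count becomes $1 + 6\binom{L}{2} + 6L = 1 + 6\bigl(\binom{L}{2}+\binom{L}{1}\bigr) = 1 + 6\binom{L+1}{2}$ by Pascal's identity $\binom{L+1}{2} = \binom{L}{2} + \binom{L}{1}$. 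The identification with the $L$-th centered hexagonal number is then immediate, since $1 + 6\binom{L}{2} = 1 + 3L(L-1) = 3L^2 - 3L + 1$ is exactly the standard closed form for that sequence.

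The main obstacle is the per-ring count of $6k$: everything else is a one-line summation or a two-line induction. The subtlety there is purely the shared-corner bookkeeping, where one must avoid double-counting the six corner hexagons at which consecutive sides of the ring meet. Making the notion of ``ring $k$'' and its six sides precise, for instance via axial lattice coordinates or by direct reference to the side/position enumeration of \Cref{fig:hex-lattice}, is what converts the intuitive picture into a rigorous count.
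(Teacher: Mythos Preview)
Your argument is correct and is the standard derivation of the centered hexagonal numbers. The paper itself offers no proof of this statement: it is labeled a \emph{Fact} and simply invoked, with the implicit appeal being to the well-known closed form for the $L$-th centered hexagonal number. Your ring-by-ring count (ring $k$ has $6k$ hexagons for $k\ge 1$, then sum to $1+6\binom{L}{2}$) is exactly how that formula is usually established, and your indexing is consistent with \Cref{fig:hex-lattice}.
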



\begin{proof}(of \Cref{thm:hexagonal-family})
    During \cref{step:lattice}, we cover the radius $r$ search area with a
    lattice of hexagons of side length $s$, where $s = \frac{2r}{3L-2}$ from
    Lemma~\ref{lem:hex-lattice-side-length-circle}.
    %
    From Fact~\ref{fact:circumcircle}, we probe each of these hexagons with
    a circle of radius $s$, reducing the search area of the next recursive layer
    by a factor of $\frac{3L}{2} - 1$.
    As the
    lattice grows,
    we reduce the
    search area by a greater factor.
    In order to finally reach a circle of radius 1, we require
    $\lceil \log_{\frac{3L}{2} - 1}(n) \rceil$ recursive rounds, each
    requiring one
    response from the POI.
    Solving for $L$, we obtain \Cref{eq:hexagonal-rings}.
    Finally, since we can probe all the hexagons but one per round, we
    probe at most $6 \binom{L}{2}$ hexagons per round (see
    Fact~\ref{fact:centered-hexagonal}), and since there are $R_\text{max}$
    rounds, the total number of probes is bounded by $P(n) \leq 6 R_\text{max}
    \binom{L}{2}$.
\end{proof}

\begin{corollary} \label{cor:hexagonal-family}
    The total number of probes required to find a POI is bounded as follows:
    \begin{enumerate}
        \item \label{item:one-response} If $R_\text{max} = 1$, then $P(n) \leq \frac{4 n^2}{3} + 6 n + 6 =
            \mathcal{O}(n^2)$.

        \item \label{item:two-responses} If $R_\text{max} = 2$, then $P(n) \leq \frac{8 n}{3} + 12 \sqrt{n} + 12 =
        \mathcal{O}(n)$.

        \item \label{item:log-responses} If $R_\text{max} = \lceil \log{n} \rceil$, then $P(n) \leq 6 \lceil \log{n}
        \rceil$.\footnote{This follows from the fact that $n^{\frac{1}{\lceil
        \log{n} \rceil}} \leq 2$.}
    \end{enumerate}
\end{corollary}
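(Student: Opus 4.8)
The plan is to obtain all three bounds by directly specializing the estimate of \Cref{thm:hexagonal-family} to the relevant value of $R_\text{max}$ and then simplifying with elementary bounds on the ceiling function. Throughout I would abbreviate $L = \lceil \frac{2n^{1/R_\text{max}}+2}{3}\rceil$ for the layer count that the theorem prescribes, and I would use that $\binom{L}{2} = \tfrac12 L(L-1)$ is nondecreasing in $L \ge 1$, so it suffices to upper bound $L$. The key inequality is the standard $\lceil x\rceil < x+1$, which also yields $\lceil x\rceil - 1 < x$.

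For the first part I would put $R_\text{max}=1$, so $L = \lceil\frac{2n+2}{3}\rceil$, hence $L \le \frac{2n+2}{3}+1$ and $L-1 \le \frac{2n+2}{3}$; then $P(n) \le 6\binom{L}{2} = 3L(L-1) \le 3\bigl(\frac{2n+2}{3}+1\bigr)\frac{2n+2}{3} = \frac{4n^2}{3}+\frac{14n}{3}+\frac{10}{3}$, which is at most $\frac{4n^2}{3}+6n+6$ and is $\mathcal{O}(n^2)$. The second part is the same computation with $R_\text{max}=2$, so that $n^{1/R_\text{max}}=\sqrt n$ and $L = \lceil\frac{2\sqrt n+2}{3}\rceil$: here $P(n) \le 12\binom{L}{2} = 6L(L-1) \le 6\bigl(\frac{2\sqrt n+2}{3}+1\bigr)\frac{2\sqrt n+2}{3} = \frac{8n}{3}+\frac{28\sqrt n}{3}+\frac{20}{3} \le \frac{8n}{3}+12\sqrt n+12$, which is $\mathcal{O}(n)$.

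For the third part I would take $R_\text{max} = \lceil\log n\rceil$ and invoke $n^{1/\lceil\log n\rceil}\le 2$ (since $n \le 2^{\lceil\log n\rceil}$), together with the trivial $n^{1/\lceil\log n\rceil}\ge 1$; this forces $\frac{2n^{1/\lceil\log n\rceil}+2}{3}$ to lie strictly above $1$ and at most $2$, so $L = 2$ exactly --- which is also the smallest layer count for which the per-round shrink factor $\frac{3L}{2}-1$ exceeds $1$, so nothing smaller is relevant. Then $\binom{L}{2} = \binom{2}{2} = 1$, and \Cref{thm:hexagonal-family} gives $P(n) \le 6\lceil\log n\rceil$ immediately.

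I do not expect any real obstacle: once \Cref{thm:hexagonal-family} is available, the rest is routine algebra. The only things to watch are rounding the ceilings in the correct (upper) direction when bounding $L(L-1)$, and checking that the leftover linear and constant slack in the first two parts --- $\frac{14n}{3}$ against $6n$, $\frac{28\sqrt n}{3}$ against $12\sqrt n$, and the constants against $6$ and $12$ --- is comfortably nonnegative.
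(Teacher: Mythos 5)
Your proposal is correct and matches the paper's (implicit) argument exactly: the paper offers no separate proof of this corollary, treating it as direct substitution of $R_\text{max}\in\{1,2,\lceil\log n\rceil\}$ into the bound of \Cref{thm:hexagonal-family}, with the footnoted fact $n^{1/\lceil\log n\rceil}\le 2$ handling the third case just as you do. Your algebra for bounding the ceilings and checking that the leftover slack fits under the stated coefficients is sound.
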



\section{Finding All POIs}
Once one POI is found, we shut off the tracking device so that it stops
responding to $\Delta$'s probes, yet the question remains---how should
$\Delta$ search for the rest of the POIs?
Since $\Delta$ is stateless with respect to its current search area, no
knowledge is gained about other POIs from the search for the first POI.
Even if $\Delta$ was able to retain its search path so far, the result of
probes are binary; any previous positive probe result cannot be relied on.
And even if the probe were able to determine the exact \textit{quantity} of
POIs within the area, it is possible for them to be in different search areas early
on, resulting in the probe needing to perform its search almost entirely from
scratch.
This can trivially happen if the POIs are far from each other, but may
also happen if the POIs are close to each other but on opposite sides of a
search area boundary.
Does there exist a \emph{coordinated} search strategy that performs better than
the \textit{incremental} strategy of independently searching for each POI?




Let us assume we have a search algorithm $\mathcal{A}(n)$ that is able to find a
single POI in $P(n) \leq c \lceil \log{n} \rceil$ probes, traveling a distance
of $D(n) \leq d n$.
We use $\mathcal{A}$ as a subroutine of the following strategy:

\begin{enumerate}
	\item \label{step:memoryless-find-first} Find an arbitrary POI using $\mathcal{A}(n)$.

	\item \label{step:memoryless-shutdown} Shut off the tracking device of the found POI.

	\item \label{step:memoryless-probe} Without moving $\Delta$, re-probe the
	area at radius 2, 4, 8, etc., until a probe returns a positive result (i.e.,
	another POI is found).

	\item \label{step:memoryless-find-next} Invoke $\mathcal{A}$ using this new radius to find another POI.

	\item \label{step:memoryless-repeat} Repeat \cref{step:memoryless-shutdown,step:memoryless-probe,step:memoryless-find-next} until all POIs are found.
\end{enumerate}

Assuming there are $k$ POIs, we have:
\begin{theorem} \label{thm:memoryless-find-all}
	The total number of probes ($P_\text{tot}$) and the total distance
	traveled by $\Delta$ ($D_\text{tot}$) during the memoryless search algorithm
	for all $k$ POIs is at most:
	\begin{align*}
		P_\text{tot} &\leq c \lceil \log{n} \rceil + (c + 1) (k - 1) \lceil \log{\overline{e}} \rceil, \\
		D_\text{tot} &\leq d n + 2d E,
	\end{align*}
	where $E < \text{OPT} (\lceil \log{k} \rceil + 1)$, $\overline{e} =
	\frac{E}{k - 1}$, and $\text{OPT}$ is the optimal tour length for the
	traveling salesperson problem (TSP) on the $k$ POIs.
\end{theorem}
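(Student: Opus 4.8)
The plan is to split the total cost into the initial invocation of $\mathcal{A}$ that locates the first POI and the $k-1$ subsequent rounds, where each round is a doubling re-probe phase (which leaves $\Delta$ stationary) followed by one more invocation of $\mathcal{A}$. Fix the order $p_1,\dots,p_k$ in which the POIs happen to be found, identify $\Delta$'s position after finding $p_i$ with $p_i$ itself (the discrepancy is the localization tolerance, which I suppress per the normalization footnote), and for $1\le i\le k-1$ let $e_i$ be the distance from $p_i$ to the nearest POI in $\{p_{i+1},\dots,p_k\}$. In round $i$ the re-probe phase issues probes of radius $2,4,8,\dots$ and halts at the first radius $r_i=2^{j_i}\ge e_i$; since $r_i<2e_i$ we have $j_i=\lceil\log e_i\rceil$, and the successful probe certifies that a POI lies within $r_i$ of $\Delta$, so the subsequent call $\mathcal{A}(r_i)$ is legitimate.

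The per-round bookkeeping is then routine. The initial call costs at most $c\lceil\log n\rceil$ probes and $dn$ travel. Round $i$ contributes $j_i$ re-probes plus at most $c\lceil\log r_i\rceil=cj_i$ probes from $\mathcal{A}(r_i)$, i.e.\ $(c+1)\lceil\log e_i\rceil$ probes, and it contributes zero travel during the re-probes plus at most $d\,r_i<2d\,e_i$ travel from $\mathcal{A}(r_i)$. Summing over $i=1,\dots,k-1$ and setting $E=\sum_{i=1}^{k-1}e_i$ gives $D_\text{tot}\le dn+2dE$ directly, and $P_\text{tot}\le c\lceil\log n\rceil+(c+1)\sum_{i=1}^{k-1}\lceil\log e_i\rceil$; passing from the last sum to $(c+1)(k-1)\lceil\log\overline{e}\rceil$ uses concavity of $\log$ (Jensen's inequality), $\frac{1}{k-1}\sum_i\log e_i\le\log\big(\frac{1}{k-1}\sum_i e_i\big)=\log\overline{e}$, together with the rounding.

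The crux, and the step I expect to fight with, is the bound $E<\text{OPT}(\lceil\log k\rceil+1)$ relating this greedy-style walk to the optimal TSP tour. The key lemma is that for every $t>0$ the number of indices $i$ with $e_i>t$ is strictly less than $\text{OPT}/t$: if $i_1<\dots<i_m$ all satisfy $e_{i_\ell}>t$, then for $\ell<\ell'$ the POI $p_{i_{\ell'}}$ lies in $\{p_{i_\ell+1},\dots,p_k\}$, so $|p_{i_\ell}-p_{i_{\ell'}}|\ge e_{i_\ell}>t$; hence $p_{i_1},\dots,p_{i_m}$ are pairwise more than $t$ apart, an optimal tour on all $k$ POIs is partitioned by these $m$ points into $m$ arcs each of length exceeding the (more-than-$t$) chord joining its endpoints, and so $\text{OPT}>mt$. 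Ordering the $e_i$ decreasingly as $e_{(1)}\ge e_{(2)}\ge\cdots$, the lemma at $t=\text{OPT}/m$ forces $e_{(m)}\le\text{OPT}/m$, whence $E=\sum_{m=1}^{k-1}e_{(m)}\le\text{OPT}\sum_{m=1}^{k-1}\tfrac1m\le\text{OPT}(1+\ln(k-1))<\text{OPT}(\lceil\log k\rceil+1)$, the final step since $\ln(k-1)<\log_2 k\le\lceil\log k\rceil$. This is essentially the Rosenkrantz--Stearns--Lewis analysis of the nearest-neighbor heuristic; the only delicate points are the degenerate cases $m\le 1$ of the lemma (handled by noting pairwise-$t$-separated POIs force the POI diameter, hence $\text{OPT}/2$, to exceed $t$) and making the arc-partition inequality precise.

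Finally I would collect the loose ends: the localization-tolerance slack absorbed into $p_i$, the rounding step in the Jensen bound (kept at the level of individual ceilings), and the edge case $e_i\le 1$ in the re-probe count; none of these affects the stated leading-order bounds.
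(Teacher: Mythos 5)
Your decomposition, per-round probe count $(c+1)\lceil\log e_i\rceil$, travel bound $d\,r_i<2d\,e_i$, and the final Jensen step all match the paper's proof; the one place you genuinely diverge is the bound $E<\text{OPT}(\lceil\log k\rceil+1)$. The paper gets it by asserting that the discovery walk is a factor-two approximation of the nearest-neighbor TSP tour and then citing the known Rosenkrantz--Stearns--Lewis-type bound $\text{NN}\le\frac12(\lceil\log k\rceil+1)\,\text{OPT}$; you instead re-derive it from scratch via the packing lemma (at most $\text{OPT}/t$ indices with $e_i>t$, since those $p_{i_\ell}$ are pairwise more than $t$ apart and an optimal tour through them decomposes into arcs each longer than $t$), sorted values, and a harmonic sum. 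Your route is self-contained and arguably tighter logically: defining $e_i$ as the distance to the \emph{nearest} still-undiscovered POI is exactly what the doubling re-probe measures, and your lemma works for an arbitrary discovery order, so you sidestep the paper's somewhat hand-wavy claim that a ``2-approximate nearest-neighbor'' walk inherits the NN tour's guarantee even though $\mathcal{A}$ may return a non-nearest POI. The price is a slightly weaker constant ($H_{k-1}\le 1+\ln(k-1)$ versus $\frac12\lceil\log k\rceil+\frac12$), which still fits under the stated bound. One shared wrinkle you flag but do not resolve, and which the paper also glosses over: Jensen gives $\sum\log e_i\le(k-1)\log\overline{e}$, but the claimed $\sum\lceil\log e_i\rceil\le(k-1)\lceil\log\overline{e}\rceil$ can fail once the ceilings are applied termwise (e.g.\ $e_1=2.1$, $e_2=1.9$ gives $3\not\le 2$); a clean fix is to state the probe bound as $(c+1)\bigl((k-1)\lceil\log\overline{e}\rceil+(k-1)\bigr)$ or to keep the sum $\sum_i\lceil\log e_i\rceil$ unreduced, but this is the paper's imprecision as much as yours.
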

%
\begin{proof}
	See appendix.
\end{proof}




%
%
\renewcommand{\emph}[1]{\textit{#1}}

\small
\bibliographystyle{plainurl}
\bibliography{refs}

\clearpage
%
%
\renewcommand{\emph}[1]{\textbf{\textit{#1}}}

\appendix

\section{Additional Related Work}

As mentioned in the introduction, the Marco Polo problem falls into a rich area
of study known as \emph{localization algorithms}; see, e.g., the survey by Han,
Xu, Duong, Jiang, and Hara~\cite{han2013localization}.
Our approach differs from the approaches used in this prior work,
however, in that we are interested in strictly combinatorial strategies,
where all we learn is a single in-or-out result from each probe,
rather than, say, range and/or directional results,
such as in the work by Martinson and Dellaert~\cite{martinson}.


The Marco Polo problem is related to 
combinatorial group testing,
see, e.g., \cite{du1999combinatorial,eppstein2007improved,goodrich2008improved},
which was originally directed
at identifying WWII soldiers with syphilis~\cite{dorfman} 
and was recently applied to COVID-19 testing~\cite{covid}.
In this problem,
one is given a set of $n$ items, at most
$d$ of which are ``defective.''
Subsets of the items (such as blood samples) can be 
pooled and tested as a group, such that if one of the items 
in the pool is defective, then the test for the pool will be positive.
Tests can be organized to
efficiently identify the defective items based on the outcomes
of the tests.
The Marco Polo problem
differs from combinatorial group
testing, however, in that the search space for the Marco Polo problem
is a geometric region and tests must be connected
geometric shapes (i.e., radius-$d$ balls using the Euclidean metric), 
whereas the search space in
combinatorial group testing is defined by a discrete set of $n$ items
and tests can be arbitrary subsets of these items.

%

There is also some work on SAR 
algorithms that use call-and-response
protocols, such as the CenWits system by Huang, Amjad, and 
Mishra~\cite{cenwits},
which uses RF-based sensors
for the search and rescue of people, such as hikers, 
who are carrying mobile wireless communication devices
in wilderness areas. 

In the context of computational geometry, the Marco Polo problem is
somewhat related to the Freeze-Tag 
problem~\cite{arkin2006freeze,hammar2006online,arkin2003improved,bonichon2024euclidean,bonichon2024freeze,pedrosa2023freeze}, 
which involves ``waking up'' a collection for moving robots that are initially
at given points in the plane via a strategy motivated from the
children's game, ``Freeze Tag''~\cite{wiki:freeze}.

We stress that we are interested in
solutions to the Marco Polo problem that are adaptive,
where the $i$-th probe can depend on the results of the probes
that came earlier.
A non-adaptive solution to the Marco Polo problem would be related
to a constructive solution to a classic disk covering problem,
which asks for the
minimum number of disks of radius $\epsilon>0$ that can cover a 
region in the plane~\cite{cover}.

\section{Supplemental Pseudocode}

Pseudocode bounding the distance traveled by the search point, $\Delta$,
is provided in 
    Figure~\ref{fig:distance-bound-code}.

\begin{figure}[htb]
    \begin{algorithm}[H]
    \caption{Bounding the distance traveled by $\Delta$}
    \label{alg:distance-bound}
    \begin{algorithmic}[1]
    \Require A probe placement as a list of tuples $(x, y, \rho)$
    \Ensure Upper bound on the total distance traveled
    \State $d_k \gets 0$
    \State $p_{\text{curr}} \gets (0, 0)$ \Comment{Current position}
    \State $b_{\text{max}} \gets 0$ \Comment{Maximum bound}

    \For{each probe $(x, y, \rho_k)$ in placement}
        \State $(c_x, c_y) \gets p_{\text{curr}}$
        \State $d_k \gets d_k + \sqrt{(x - c_x)^2 + (y - c_y)^2}$

        \If{probe is last in placement}
			\LineComment{Skip round trip to final probe center}
			\vspace{\medskipamount}
            \State $(n_x, n_y) \gets \text{first probe in placement}$
			\State $d_1 \gets \sqrt{n_x^2 + n_y^2}$
            \State $d_k \gets d_k - 2 d_1 \rho_k$ \Comment{Subtract round trip}
        \EndIf

        \State $b_{\text{max}} \gets \max(b_{\text{max}}, \frac{d_k}{1 - \rho_k})$
        \State $p_{\text{curr}} \gets (x, y)$
    \EndFor

    \State \Return $b_{\text{max}}$
    \end{algorithmic}
    \end{algorithm}
    \caption[Algorithm for bounding the distance traveled by $\Delta$.]{Algorithm for bounding the distance traveled by $\Delta$.
        When the POIs are determined to be in the last probe area, $\Delta$
        can move directly to the first probe of the next search area, saving distance.
        If the first probe is at distance $d_1$ from the origin of the original
        search area, it is at distance $d_1 \rho_k$ from the origin of the
        new search area.
        The probe placement within the new search area can be rotated such that
        the first probe is as close to $\Delta$ as possible, saving an entire $d_1$-length
        round trip to the origin and back.\footnotemark
    }
    \label{fig:distance-bound-code}
\end{figure}

\footnotetext{This assumes there is not
        significant overlap between the last two probes, which holds for all our
        algorithms.}

\section{Omitted Proofs}
In this appendix, we provide proofs that were omitted in the body
of this paper.

\begin{lemma} 
(Same as Lemma~\ref{lem:hex-lattice-side-length-circle})
    A circle with radius $r$ can be covered by an $L$-layer lattice of hexagons
    with side length $s = \frac{2r}{3L-2}$.
\end{lemma}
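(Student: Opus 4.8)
The plan is to prove the stronger statement that \emph{every} point $P$ with $|P|\le r:=\tfrac{(3L-2)s}{2}=\bigl(\tfrac{3L}{2}-1\bigr)s$ lies in one of the cells of the $L$-layer lattice, where by the $L$-layer lattice we mean the central hexagon together with its first $L-1$ concentric rings (so it has $1+6\binom{L}{2}$ cells, matching \Cref{fact:centered-hexagonal}). Since the hexagons tile the plane, $P$ lies in some cell $H$ of the infinite lattice, say in ring $j$ (the central hexagon being ring $0$). By \Cref{fact:circumcircle}, $H$ is contained in the disk of radius $s$ about its center $c_H$, so $|P-c_H|\le s$ and hence $|c_H|\le|P|+s\le\tfrac{3L}{2}s$. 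The whole argument then hinges on a matching lower bound for $|c_H|$ in terms of $j$.

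The key estimate I would establish is: \emph{for every $j\ge1$, each ring-$j$ cell center satisfies $|c|\ge\tfrac{3j}{2}s$, with equality only when $j$ is even.} In axial lattice coordinates $(a,b)\in\mathbb{Z}^2$ for the cell centers (generating vectors of length $\sqrt{3}\,s$ at $60^\circ$), one has $|c|^2=3s^2(a^2+ab+b^2)=\tfrac32 s^2\bigl(a^2+b^2+(a+b)^2\bigr)$, while the ring index is $j=\tfrac12\bigl(|a|+|b|+|a+b|\bigr)$. Minimizing $a^2+b^2+(a+b)^2$ over the integers subject to $|a|+|b|+|a+b|=2j$ gives minimum $6k^2$ when $j=2k$ (attained at $(k,k)$ and its five symmetric images, which are exactly the six ring-$j$ cells nearest the origin and which sit in the vertex directions of the central hexagon) and $2(3k^2+3k+1)$ when $j=2k+1$; since $3(3k^2+3k+1)>9k^2+9k+\tfrac94=\bigl(\tfrac{3(2k+1)}{2}\bigr)^2$, the odd case is strictly above $\tfrac{3j}{2}s$. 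With $|c_H|\le\tfrac{3L}{2}s$ this forces $j\le L$, and, as a by-product, for even $L$ the radius $r$ is exactly the inradius of the lattice's union (those vertex directions being genuine notches of the boundary).

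It remains to split on $j$. If $j\le L-1$, then $H$ is already a cell of the $L$-layer lattice, so $P\in H$ is covered. If $j=L$, then every inequality above is an equality: $L$ is even, $|c_H|=\tfrac{3L}{2}s$, $|P|=r$, $|P-c_H|=s$, and $O,P,c_H$ are collinear with $P$ on $\overline{Oc_H}$; hence $P$ must be the point $P^\ast$ of $\overline{Oc_H}$ at distance $s$ from $c_H$. As $|P^\ast-c_H|=s$ is the circumradius of $H$ and $P^\ast\in H$, the point $P^\ast$ is a vertex of $H$ (a regular hexagon meets its circumscribed circle only at its vertices), so it is shared with exactly two other cells, whose centers a direct computation places at distance $\sqrt{r^2-rs+s^2}<\tfrac{3L}{2}s$ from $O$ (they arise from $P^\ast$ by cell steps of length $s$ making angle $\pm120^\circ$ with $\overline{P^\ast c_H}$); by the key estimate those two cells lie in rings at most $L-1$, hence belong to the $L$-layer lattice, and $P^\ast$ lies in each of them. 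Thus $P$ is covered in all cases. The main obstacle is the integer optimization underlying the key estimate, together with checking that its minimizers really are the ring-$j$ cells closest to the origin; the borderline case $j=L$ is the delicate endpoint, but it is forced by the collinearity at equality once the estimate is available.
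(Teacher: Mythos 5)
Your proof is correct, and it takes a genuinely different route from the paper's. The paper argues in one dimension: it walks from the lattice center outward along the central hexagon's maximal-diameter (vertex) direction, asserts that this is the direction in which the union's boundary comes closest, and sums the contributions of the layers ($s$ for the first and for each even layer, $2s$ for each later odd layer) to get the inradius $(\tfrac{3L}{2}-1)s$, noting the bound is tight for even $L$ and only improves for odd $L$. You instead give a genuine covering argument: take an arbitrary point of the disk, locate the tiling cell containing it, and bound that cell's ring index by minimizing $a^2+ab+b^2$ over integer lattice coordinates subject to a fixed hex distance, with a separate treatment of the tight case $j=L$ via the two neighboring cells sharing the critical vertex. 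What your version buys is rigor at exactly the two places the paper hand-waves: it \emph{proves} that the vertex direction is the extremal one (this is the content of your integer minimization, including the parity dichotomy that reproduces the paper's even/odd observation), and it covers the boundary points $|P|=r$, which for even $L$ lie exactly on the frontier of the union and are only caught because they are shared vertices of cells in ring $L-1$. What the paper's version buys is brevity and geometric transparency---the layer-by-layer summation makes it immediately visible where the constant $\tfrac{3L}{2}-1$ comes from---at the cost of leaving the identification of the worst direction unjustified. The two arguments agree on the extremal configuration and the resulting constant, so your proof can be read as a rigorous completion of the paper's sketch.
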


\begin{proof}
    Let us consider the case of a hexagonal lattice with an even number of
    layers, as in \Cref{fig:hex-lattice}.
    The closest point from the center of the lattice to its boundary is obtained
    by moving diagonally along the center hexagons maximal diameter.
    Consider the distance, $r$, to that point.
    Every even-numbered layer of hexagons contributes one side length
    $s$, while every odd numbered layer contributes its maximal diameter $2s$,
    except for the first layer which contributes $s$.
    Thus, the total circumradius $r = \frac{3L}{2}s - s = (\frac{3L}{2} -
    1)s$.\footnote{This bound is tight for lattices with an even number of
    layers, and only improves for lattices with an odd number of layers.}
\end{proof}

\begin{theorem}[Same as \Cref{thm:memoryless-find-all}]
	The total number of probes ($P_\text{tot}$) and the total distance
	traveled by $\Delta$ ($D_\text{tot}$) during the memoryless search algorithm
	for all $k$ POIs is at most:
	\begin{align*}
		P_\text{tot} &\leq c \lceil \log{n} \rceil + (c + 1) (k - 1) \lceil \log{\overline{e}} \rceil, \\
		D_\text{tot} &\leq d n + 2d E,
	\end{align*}
	where $E < \text{OPT} (\lceil \log{k} \rceil + 1)$, $\overline{e} =
	\frac{E}{k - 1}$, and $\text{OPT}$ is the optimal tour length for the
	traveling salesperson problem (TSP) on the $k$ POIs.
\end{theorem}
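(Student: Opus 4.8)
The plan is to analyze the memoryless find-all strategy step by step, separately bounding the probe count and the travel distance, with the main work going into controlling the sum of the re-probe radii across the $k-1$ successive searches via a charging argument against an optimal TSP tour.

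\textbf{Bounding the probes.} First I would account for \cref{step:memoryless-find-first}: finding the first POI costs at most $c\lceil \log n\rceil$ probes by hypothesis on $\mathcal{A}$. Then, for each subsequent POI $i$ (there are $k-1$ of them), I would bound the cost of \cref{step:memoryless-probe,step:memoryless-find-next}. In \cref{step:memoryless-probe} we probe at radii $2,4,8,\dots$ until a POI is detected; if the nearest remaining POI is at distance $e_i$ from $\Delta$'s current position, this terminates with radius at most $2e_i$ (the smallest power of two that is $\ge e_i$ is $< 2e_i$), using at most $\lceil \log e_i\rceil + O(1)$ probes — I'd fold the additive constant into the statement's use of $\lceil\log\overline e\rceil$, being slightly careful about how the ceilings are combined. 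Then \cref{step:memoryless-find-next} invokes $\mathcal{A}$ on a search region of radius $\le 2e_i$, costing $c\lceil\log(2e_i)\rceil = c\lceil\log e_i\rceil + c$ probes. So POI $i$ costs $(c+1)\lceil\log e_i\rceil + O(1)$, i.e.\ roughly $(c+1)(\lceil\log e_i\rceil)$. Summing over $i$ and applying concavity of $\log$ (Jensen) to $\sum_{i}\lceil\log e_i\rceil \le (k-1)\lceil \log(\tfrac{1}{k-1}\sum_i e_i)\rceil = (k-1)\lceil\log\overline e\rceil$ where $E = \sum_i e_i$ gives the claimed $P_\text{tot} \le c\lceil\log n\rceil + (c+1)(k-1)\lceil\log\overline e\rceil$.

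\textbf{Bounding the distance.} The first search travels at most $dn$. For each subsequent POI, \cref{step:memoryless-probe} requires no movement, and \cref{step:memoryless-find-next} runs $\mathcal{A}$ on a radius-$\le 2e_i$ region, so it travels at most $d\cdot 2e_i$. Hence $D_\text{tot}\le dn + 2d\sum_i e_i = dn + 2dE$. This part is essentially immediate once the radii $2e_i$ are pinned down; the content is entirely in the definition of $e_i$ as the distance from $\Delta$'s position (which is within distance $1$ of the just-found POI) to the next-nearest live POI, matching the sense in which $E$ is a sum of "nearest-neighbor-style" hops.

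\textbf{The main obstacle: bounding $E$ against OPT.} The crux is showing $E < \text{OPT}(\lceil\log k\rceil + 1)$, i.e.\ that the greedy sequence of hops — at each step, jump (up to an additive $1$, which I will absorb or handle separately) to the nearest not-yet-found POI — has total length $O(\log k)$ times the optimal TSP tour. This is the classical fact that the greedy/nearest-neighbor matching-style walk is $O(\log k)$-competitive with the optimal tour. I would argue it by a standard dyadic/scaling charging argument: sort the $k-1$ hop lengths in decreasing order $e_{(1)}\ge e_{(2)}\ge\cdots$; group them into classes by $\lceil\log\rceil$-scale; in each scale class, the points visited are pairwise separated by at least that scale (a consequence of the greedy choice — once we leave a point, every later live point is at least as far as the hop we took), so a packing argument shows an optimal tour of length $L$ can contain at most $O(L/(\text{scale}))$ such well-separated points, hence the class contributes $O(L)$ to $E$; there are $O(\log k)$ nonempty scale classes, giving $E = O(\text{OPT}\log k)$, and a careful version of this yields the stated constant $\lceil\log k\rceil + 1$. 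The delicate points I expect to spend the most care on are: (i) justifying the separation property precisely given that $\Delta$ only reaches within distance $1$ of each found POI rather than exactly to it (here the normalization that POIs are at distance $\ge 1$ scale, or equivalently $\epsilon$-accuracy, keeps the additive $1$ from dominating), and (ii) getting the logarithm's base and the "+1" exactly rather than just an $O(\cdot)$, which requires counting scale classes between the largest hop and $\text{OPT}/(k-1)$-ish carefully.
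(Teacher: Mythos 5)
Your decomposition matches the paper's: the first POI costs $c\lceil\log n\rceil$ probes; each subsequent POI $i$ costs $\lceil\log e_i\rceil$ doubling probes plus $c\lceil\log e_i\rceil$ probes for $\mathcal{A}$ on the radius-$2^{\lceil\log e_i\rceil}$ region (note the paper evaluates $P(2^{\lceil\log e_i\rceil})=c\lceil\log e_i\rceil$ exactly, so no additive $+c$ appears); the per-hop sum is then compressed with the same Jensen-with-ceilings step, and the distance bound $dn+2dE$ is obtained identically. The one place you genuinely diverge is the bound $E<\text{OPT}(\lceil\log k\rceil+1)$: the paper does not prove this from scratch. It observes that each hop $e_i$ is at most twice the distance to the nearest undiscovered POI (because the doubling search overshoots by at most a factor of $2$, and $\mathcal{A}$ may return any POI in that disk, not the nearest), so the walk is a factor-two approximation of the nearest-neighbor tour, and then cites the classical result that the nearest-neighbor tour is at most $\tfrac12(\lceil\log k\rceil+1)\cdot\text{OPT}$; the factor $2$ and the $\tfrac12$ cancel to give the stated constant. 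You instead propose to re-derive the $O(\log k)$ competitiveness via a dyadic scale-class packing argument. That is a legitimate, more self-contained route, but it is also where your sketch is least reliable: your separation property (``once we leave a point, every later live point is at least as far as the hop we took'') is false as stated here, precisely because the POI reached need not be the nearest one --- the hop can be up to twice the nearest distance, so later live points are only guaranteed to be at least \emph{half} the hop away. This factor of two must be threaded through the packing argument (it is exactly what the paper's ``factor-two approximation of NN'' remark absorbs), and getting the clean constant $\lceil\log k\rceil+1$ rather than $O(\log k)$ this way is real work; citing the known nearest-neighbor bound, as the paper does, buys that constant for free.

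One further caution that applies to both your write-up and the paper's: the step $\sum_i\lceil\log e_i\rceil\le(k-1)\lceil\log\overline{e}\rceil$ is not a literal consequence of Jensen once the ceilings are inside the sum (e.g.\ $e_1=4.1$, $e_2=3.8$ gives $3+2=5$ on the left but $2\cdot 2=4$ on the right), so if you want this inequality you should either drop the inner ceilings before averaging or accept an additive $k-1$ slack; this does not change the asymptotics but it is worth flagging rather than asserting.
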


\begin{proof}
    We re-iterate the steps of our stateless below:
    \begin{enumerate}
        \item Find an arbitrary POI using $\mathcal{A}(n)$.

        \item Shut off the tracking device of the found POI.

        \item Without moving $\Delta$, re-probe the
        area at radius 2, 4, 8, etc., until a probe returns a positive result (i.e.,
        another POI is found).

        \item Invoke $\mathcal{A}$ using this new radius to find another POI.

        \item Repeat \cref{step:memoryless-shutdown,step:memoryless-probe,step:memoryless-find-next} until all POIs are found.
    \end{enumerate}

	Other than \cref{step:memoryless-find-first}, the performance of the search
	strategy depends by the relative positions of the POIs.
	Let the first POI found be POI 0, and the second POI found in
	\cref{step:memoryless-find-next} be POI 1.
	Let the distance between the two POIs be denoted as $e_1$, and the
	distance between POI $i - 1$ and POI $i$ be denoted as $e_i$, for $i \in
	\{1, 2, \ldots, k\}$.
	The total number of probes in \cref{step:memoryless-probe} required to find a
	large enough search area containing POI $i$ is $\lceil \log{e_i} \rceil$,
	where the size of the search area is $2^{\lceil \log{e_i} \rceil}$.
	Finding the POI within this search area (\cref{step:memoryless-find-next})
	will take $P(2^{\lceil \log{e_i} \rceil}) = c \lceil \log{2^{\lceil
	\log{e_i}\rceil}} \rceil = c \lceil \log{e_i}\rceil$ probes.
	Adding on the initial probe to find POI 0, the total number of probes
	required to find all POIs is at most:
	\begin{equation*}
		P_\text{tot} \leq c \lceil \log{n} \rceil + \sum_{i=1}^{k - 1} (c + 1) \lceil \log{e_i} \rceil.
	\end{equation*}

	Let $E = \sum_{i=1}^{k - 1} e_i$, and $\overline{e} = \frac{E}{k - 1}$.
	At each step of the algorithm, we find a POI that is within a factor of
	two of the closest still-undiscovered POI to the last found POI.
	This algorithm will therefore perform at worst a factor of two approximation
	of the nearest neighbor tour for the traveling salesperson problem
	(TSP), e.g., see \cite{HOUGARDY2015101,BRECKLINGHAUS2015259}.
	In Euclidean space, the (greedy) nearest neighbor tour of $n$ salespeople
	performs at worst a factor of $\frac{1}{2}(\lceil \log{n} \rceil + 1)$ of
	the optimal TSP tour length (OPT), see \cite{nnupper}.
	Applying to our case, $E < \text{OPT} (\lceil \log{k} \rceil + 1)$.
	Since $\log{e}$ is a concave function, it
    follows from Jensen's inequality
	that $\sum_{i=1}^{k - 1} \lceil \log{e_i} \rceil  \leq (k - 1) \lceil
	\log{\overline{e}} \rceil$,
	where $\overline{e}$ is the average value of
	$e$,
	and we obtain our desired result.

	Regarding the total distance traveled, note that $\Delta$ only moves during
	\cref{step:memoryless-find-first,step:memoryless-find-next}.
	For the first POI, $\Delta$ travels $D(n) \leq d n$.
	Using similar reasoning as above, for the $i$-th POI, $\Delta$ travels
	$D(2^{\lceil \log{e_i} \rceil}) \leq d (2 e_i)$.
	Overall, the total distance traveled is at most: $d n + 2d \sum_{i=1}^{k -
	1} e_i = d n + 2d E$, and our result follows.
\end{proof}

\section{Algorithm Assumptions} \label{sec:assumptions}

In this section we justify some assumptions regarding the possible probes and
locations of the POIs that we make in the main body of the paper.
Namely, we assume that:
\begin{itemize}
\item
The maximum allowed probe distance, $d$, is $n$.
\item
There may be multiple POIs,
either within the initial search region
or slightly outside of it.
\end{itemize}
More specifically, we show how, by relaxing these assumptions, a relatively
simple solution is able to solve the problem using an optimal number of probes.

\subsection{Simple Case: Exactly One POI at Full Distance}
\label{sec:bad}
Let us begin
with a simple (but admittedly unrealistic) scenario,
where there is exactly one POI within distance~$n$ of the origin
and we can perform arbitrarily large probes.
In this case, where
we allow unbounded probe distances,
we can find the sole POI using two
binary searches, one in each dimension.
We place $\Delta$ at a very large distance from the origin, such that the
intersection for the probe with the original search region is nearly a
straight line,
and perform one binary search using $\lceil \log{n} \rceil + 1$ probes to find a
region of width at most 1 in the first dimension.
We then perform a similar binary search placing $\Delta$ in an orthogonal
direction to find the POI in the second dimension.
We are left with a region of side length 1, which is fully contained
within a single, final, radius-1 probe.
This solution thus
uses $2 \lceil \log{n} \rceil + \mathcal{O}(1)$ probes,
but it relies on $\Delta$ traveling very far away and at high
altitude and the tracking device to have very strong signal strength,
which are not reasonable assumptions.
This can be improved somewhat
by restricting the probe distance to $n$,
as we explore in the next section.

\subsection{A Slight Improvement to a Bad Algorithm}
%
While the previous algorithm allows for unbounded probe distances,
we now show how to achieve a similar result while restricting the probe size to
be at most $n$.
In this case, we
reduce the horizontal dimension this time first to a
1-width arc, and then cut the length of the arc in half for each subsequent
probe.
Note that the arc may be up to $\pi n$ units long, which is greater than
the $2n$ original region diameter, resulting in up to one more probe.
This probe can be removed by reducing the
remaining \textit{area} by two per probe in the
first dimension rather than the arc width.
See \Cref{fig:triv-one}.

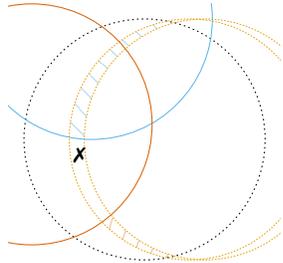
\begin{figure}
	\centering
	\resizebox*{0.45\linewidth}{!}{\begin{tikzpicture}
	\def\n{8}

	\tikzset{
		clip even odd rule/.code={\pgfseteorule}, 
		invclip/.style={
			clip,insert path=
				[clip even odd rule]{
					[reset cm](-\maxdimen,-\maxdimen)rectangle(\maxdimen,\maxdimen)
				}
		}
	}

	\begin{scope}
		\clip (-\n-1,-\n-1) rectangle (\n+1,\n+1);
		
		\begin{scope}
			\clip[invclip] (4,0) circle (\n);
			\clip (0,0) circle (\n);
			\clip (3,0) circle (\n);

			\fill[pattern={Lines[angle=-45,distance=6]},pattern color=okabe3!50] (-3.5,8) circle (\n);
		\end{scope}

		\begin{scope}
			\clip[invclip] (4,0) circle (\n);
			\clip[invclip] (-7.5,1) circle (\n);
			\clip[invclip] (-3.5,8) circle (\n);
			\clip (0,0) circle (\n);

			\fill[pattern={Lines[angle=70,distance=6]},pattern color=okabe7!40] (3,0) circle (\n);
		\end{scope}

		\draw[line width=2,loosely dashed] (0,0) circle (\n);
		\draw[color=okabe2,dashed] (4,0) circle (\n);
		\draw[color=okabe2,dashed] (3,0) circle (\n);

		\draw[color=okabe3] (-3.5,8) circle (\n);
		\draw[color=okabe7] (-7.5,1) circle (\n);

		\node[scale=4] at (-4.3,-1) {\ding{55}};
	\end{scope}

		






\end{tikzpicture}}
	\caption{\label{fig:triv-one}
		A search for one POI (\ding{55}), first reducing the horizontal dimension to a
		width of 1 (in the dashed orange probes), then searching along the remaining arc
		(with the solid blue and orange probes).
	}
\end{figure}

Note that this algorithm may require $\Delta$ to travel outside of our original
search region.
This can be avoided with a final algorithm.
We can place $\Delta$ at the center of the original search region and perform
a binary search by changing the radius of the probes, resulting in at most a
width-1 shell of the original search region, requiring at most $\lceil \log{n} \rceil$
probes.\footnote{Note that reducing the remaining \textit{area} by a factor of
two at this stage would reduce a constant number of probes.}
This shell of outer-radius $r$ is $2 \pi r$ units long, where $r$ can be up to $n$.
We can reduce this shell using a similar binary search, by moving $\Delta$
along the outer edge of the shell, performing probes with radius $r$.
See \Cref{fig:triv-one-inside} for an example of this algorithm.
Note that the first of these probes will only reduce one third of the shell,
rather than a factor of 2, resulting in up to one more probe overall.
We can then perform probes to reduce this shell to a width-1 square, which could
be performed in $\lceil \log{n} \rceil + 2$ probes.

\begin{figure}[hbt!]
\centering
\resizebox*{.5\linewidth}{!}{\begin{tikzpicture}
	\def\n{8}

	\tikzset{
		clip even odd rule/.code={\pgfseteorule}, 
		invclip/.style={
			clip,insert path=
				[clip even odd rule]{
					[reset cm](-\maxdimen,-\maxdimen)rectangle(\maxdimen,\maxdimen)
				}
		}
	}

	\begin{scope}
		\clip (-\n-1,-\n-1) rectangle (\n+1,\n+1);
		




		\draw[line width=1] (0,0) circle (\n);
		\draw[color=okabe2,dashed] (0,0) circle (\n);
		\draw[color=okabe2,dashed] (0,0) circle (\n-1);

		\draw[color=okabe3] ({8*cos(0)},{8*sin(0)}) circle (\n);
		\draw[color=okabe3] ({8*cos(120)},{8*sin(120)}) circle (\n);
		\draw[color=okabe3] ({8*cos(240)},{8*sin(240)}) circle (\n);
	\end{scope}

		






\end{tikzpicture}}
\caption{\label{fig:triv-one-inside}
	A depiction of the final ``simple'' search strategy where it is known that
	there is only one POI within the search region.
	In this final strategy, not only is the probe distance limited to $n$, but
	$\Delta$ is also restricted to the original search region.
}
\end{figure}
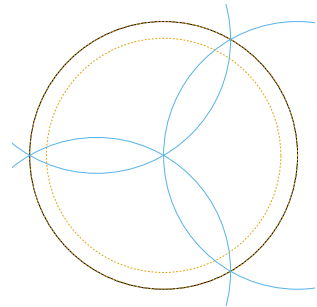

\section{Darting Non-Monotonic Algorithms}
In this section we discuss the darting algorithms, which are non-monotonic
and place many more probes at each recursive level by
first placing several carefully-placed
probes and then greedily adding more to fill in smaller and smaller gaps,
albeit at the expense of darting from side to side in the search space
to do so.
That is, the approach for our
darting algorithms
is as follows: after an initial placement of several probes,
we determine if there are any uncovered internal areas.
If so, we find the largest such area and place a
probe such that it intersects two
points on the area's convex hull.
If there are multiple possible placements, we repeatedly
choose the one that reduces the remaining area the most in a greedy fashion.
See \Cref{fig:alg-4-fail}.
Subsequent probes are placed in the same manner, until either the entire search
area is covered, or the probes become sufficiently small as to be unable to
cover the remaining area.

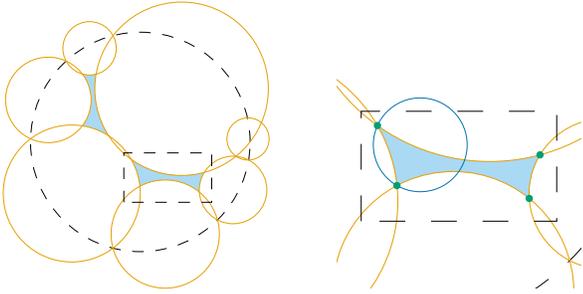
\begin{figure}[htb!]
	\centering
	\begin{subfigure}{0.46\linewidth}
		\centering
		\resizebox*{\linewidth}{!}{

\begin{tikzpicture}
	\def\circles{
        {0.3758999999999999/0.48435440536863084/0.79},
        {-0.6242412976006705/-0.469917005831701/0.6241000000000001},
        {0.22705745900761443/-0.8398556154411339/0.49303900000000006},
        {-0.837645132262955/0.382935701398447/0.3895008100000001},
        {0.8446632757242072/-0.4380196226387391/0.30770563990000005},
        {-0.4619510485116507/0.8529418020869375/0.24308745552100006},
        {0.9810128381197176/0.02710718372367052/0.19203908986159007}%
    }

	\tikzset{
		clip even odd rule/.code={\pgfseteorule}, 
		invclip/.style={
			clip,insert path=
				[clip even odd rule]{
					[reset cm](-\maxdimen,-\maxdimen)rectangle(\maxdimen,\maxdimen)
				}
		}
	}

	\begin{scope}
		\clip (0,0) circle (1);

		\foreach \x/\y/\r in \circles {
			\clip[invclip] (\x,\y) circle (\r);
		}

		\fill[okabe3!50] (0,0) circle (1);
	\end{scope}

	\draw[line width=0.1,dashed] (0,0) circle (1);
    \foreach \x/\y/\r in \circles {
      \draw[line width=0.1,okabe2] (\x,\y) circle (\r);
    }

	\draw[line width=0.1,dashed] (-0.15, -0.55) rectangle (0.65, -0.1);
\end{tikzpicture}}

	\end{subfigure}
	\hfill
	\begin{subfigure}{0.45\linewidth}
		\centering
		\resizebox*{\linewidth}{!}{\begin{tikzpicture}
	\def\circles{
        {0.3758999999999999/0.48435440536863084/0.79},
        {-0.6242412976006705/-0.469917005831701/0.6241000000000001},
        {0.22705745900761443/-0.8398556154411339/0.49303900000000006},
        {-0.837645132262955/0.382935701398447/0.3895008100000001},
        {0.8446632757242072/-0.4380196226387391/0.30770563990000005},
        {-0.4619510485116507/0.8529418020869375/0.24308745552100006},
        {0.9810128381197176/0.02710718372367052/0.19203908986159007}%
    }

	\tikzset{
		clip even odd rule/.code={\pgfseteorule}, 
		invclip/.style={
			clip,insert path=
				[clip even odd rule]{
					[reset cm](-\maxdimen,-\maxdimen)rectangle(\maxdimen,\maxdimen)
				}
		}
	}

	\begin{scope}
		\clip (-0.25, -0.825) rectangle (0.75, 0.175);

		\begin{scope}
			\clip (0,0) circle (1);

			\foreach \x/\y/\r in \circles {
				\clip[invclip] (\x,\y) circle (\r);
			}

			\fill[okabe3!50] (0,0) circle (1);
		\end{scope}

		\draw[line width=0.1,dashed] (0,0) circle (1);
		\foreach \x/\y/\r in \circles {
			\draw[line width=0.1,okabe2] (\x,\y) circle (\r);
		}

		\draw[line width=0.1,okabe6] (0.09157683073435136, -0.23732648590651592) circle (0.19203908986159007);

		\node (A) at (0.5375346952702428,-0.4568526636521755) {};
		\node (B) at (-0.0036205457305954728, -0.4041091934905429) {};
		\node (C) at (-0.08347338962420232, -0.15835540013906763) {};
		\node (D) at (0.5815987142614405, -0.27839584339364054) {};

		\fill[okabe4] (A) circle (0.015);
		\fill[okabe4] (B) circle (0.015);
		\fill[okabe4] (C) circle (0.015);
		\fill[okabe4] (D) circle (0.015);


		\draw[line width=0.1,dashed] (-0.15, -0.55) rectangle (0.65, -0.1);
	\end{scope}
\end{tikzpicture}}
	\end{subfigure}
	\caption{\label{fig:alg-4-fail}
		A method to greedily add an arbitrary amount of probes to any
		initial placement of probes.
		On the left, Algorithm 4 given an insufficient $\rho_1$ value fails to
		cover the search area, leaving some uncovered internal area (in blue).
		On the right, the largest uncovered internal area's convex hull
		(in green) is identified, and a probe is placed such that it intersects
		two points on the hull.
	}
\end{figure}

We note that
the final probe of Algorithm~4 is often the least efficient; see
\Cref{fig:alg-4-fail} (left) and \Cref{fig:alg-3-4}.
If we run Algorithm~4, remove the final probe, and then apply the greedy method
described above to place the remaining probes, we obtain Algorithm~7, which is
able to significantly reduce the number of probes required to find a POI to
$P(n)  < 2.93 \lceil \log{n} \rceil$.

Exploiting this approach further, however, strains our ability
to reason about regions that are uncovered after performing many probes;
hence, for an even further improved algorithm,
Algorithm~8, its probe sequence for each recursive level
is determined using computer-assisted proof.
More specifically, the placement of the first six probes is determined by a
differential evolution algorithm, which is a type of genetic algorithm that
optimizes a function by iteratively improving a population of candidate
solutions, see~\cite{Storn1997}.
Next, we fill in the gaps according to the
aforementioned
greedy method.
The resulting
Algorithm~8
is able to achieve
our best probe results,
with $P(n)  < 2.53 \lceil \log{n} \rceil$, albeit with
a very large value for $D(n)$.
See Figure~\ref{fig:alg-7-8}.

\section{Experiments}

We implemented our 8 algorithms and tested them.
The data were obtained by placing a POI at a random location, determined from a
uniformly random angle and a random distance from the center of the search area.
When the POI was in the last probe's search area, the last probe was not
executed, and $\Delta$ proceeded directly to the first probe of the next search
area.
Each algorithm was executed 40 million times, where $n = 2^{20}$, and normalized
by dividing by either $\lceil \log{n} \rceil$ or by $n$.
Both Algorithms~1 and~2, despite having a poor worst-case probe complexity,
perform well in practice, finding the POI using fewer probes on average
than Algorithms~3--6.
Our computer-assisted algorithms (7 and~8), however, outperform them.
Interestingly,
the progressively shrinking algorithms all have a non-zero
variability in their probe counts.
See \Cref{fig:P}.
While Algorithms~1 and~2 perform well in terms of average distance traveled,
they are outperformed by Algorithms~5 and~6, which each have the best
distance-traveled guarantees.
See \Cref{fig:D}.
Overall, the best methods are Algorithm~8 if number of probes
is a priority, Algorithm~6
if distance traveled is a priority, and Algorithm~2 for a good balance.


\begin{figure}[hbt!]
	\centering
	\includegraphics[width=0.62\linewidth]{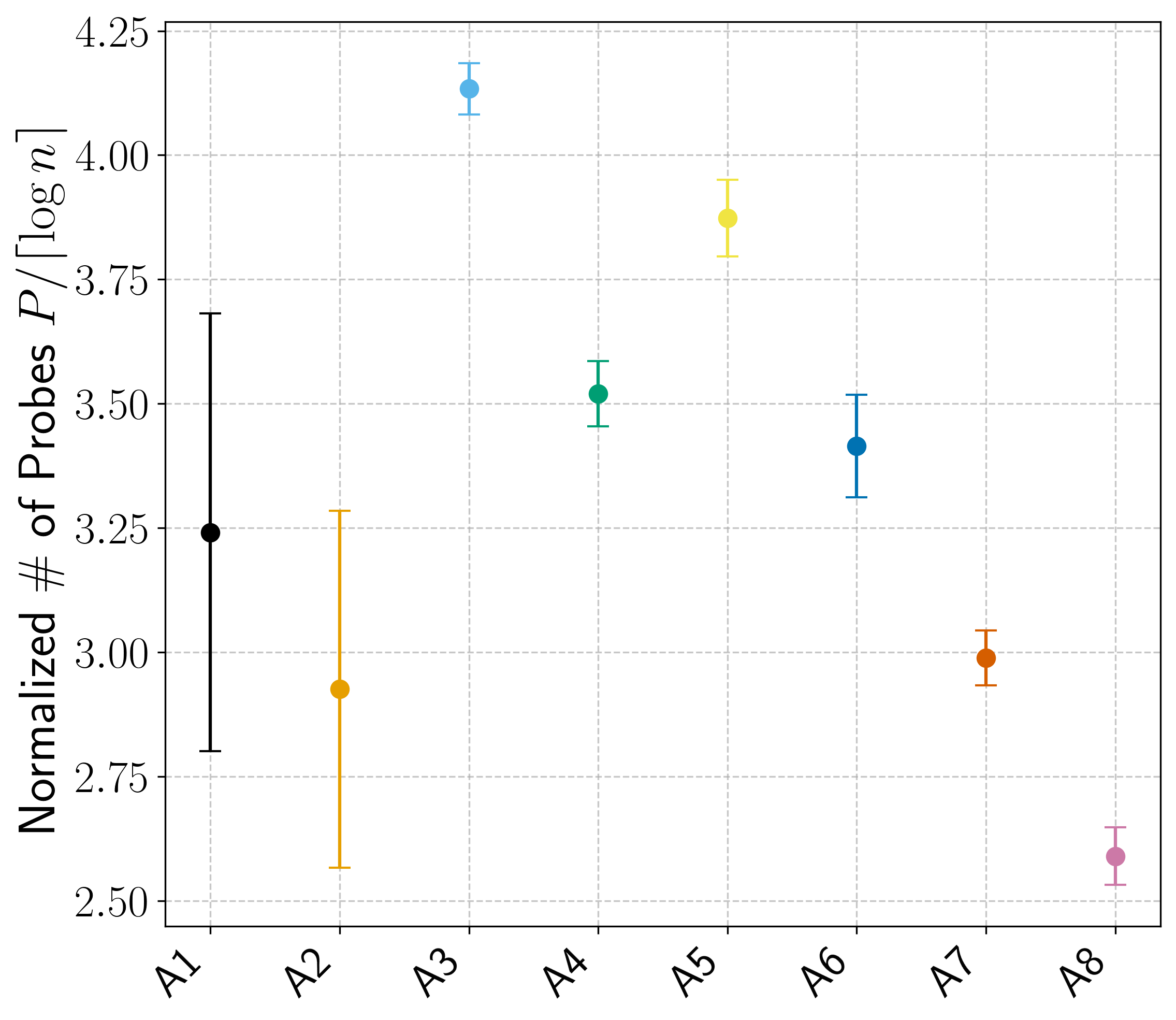}
	\caption{
		Simulation results for $P / \lceil \log{n} \rceil$%
		.
		Error bars represent one standard deviation from the mean.
	}
	\label{fig:P}
\end{figure}

\begin{figure}[htb!]
	\centering
	\includegraphics[width=0.62\linewidth]{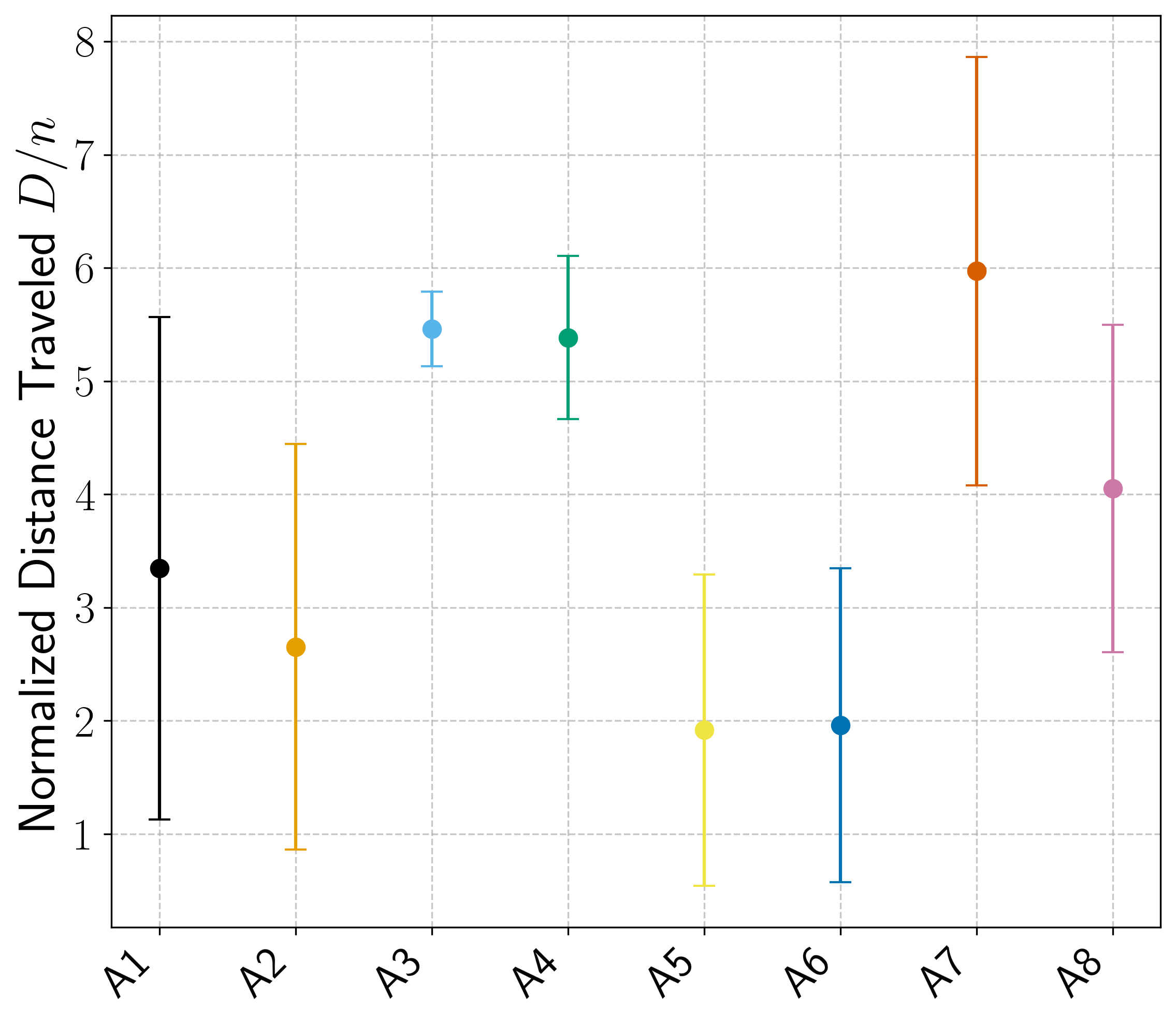}
	\caption{
		Simulation results for $D / n$%
		,
		where $n = 2^{20}$.
	}
	\label{fig:D}
\end{figure}

\begin{table*}[tb!]
	\centering
	\begin{tabular}{|l|l|rrll|rrll|rll|}
	\hline
	& & \multicolumn{4}{c|}{Probes ($P/\lceil \log{n} \rceil$)} & \multicolumn{4}{c|}{Total Distance ($D/n$)} & \multicolumn{3}{c|}{Responses ($R/\lceil \log{n} \rceil$)} \\
	Category & Alg. \# & Min & Avg & Max & Bound & Min & Avg & Max & Bound & Avg & Max & Bound \\
	\hline
	\multirow{2}{*}{Hexagonal} & Alg. 1 & \textbf{1.00} & 3.24 & 5.70 & 6.00 & \textbf{0.00} & 3.35 & 10.39 & 10.39 & \textbf{0.89} & \textbf{1.00} & \textbf{1.00} \\
	 & Alg. 2 & \textbf{1.00} & 2.93 & 4.80 & 5.00 & \textbf{0.00} & 2.65 & 8.81 & 8.81 & 1.11 & 1.45 & 2.00 \\
	\hline
	\multirow{2}{*}{Chord-Based} & Alg. 3 & 3.85 & 4.13 & 4.25 & 4.08 & 4.69 & 5.46 & 6.56 & 6.95 & 1.99 & 2.40 & 4.08 \\
	 & Alg. 4 & 3.10 & 3.52 & 3.70 & 3.54 & 4.30 & 5.38 & 9.00 & 9.31 & 1.94 & 2.50 & 3.54 \\
	\hline
	\multirow{2}{*}{Monotonic} & Alg. 5 & 3.55 & 3.87 & 4.15 & 3.83 & \textbf{0.00} & \textbf{1.92} & 6.72 & 6.72 & 2.49 & 3.85 & 3.83 \\
	 & Alg. 6 & 3.25 & 3.41 & 3.85 & 3.34 & \textbf{0.00} & 1.96 & \textbf{6.01} & \textbf{6.02} & 1.96 & 3.35 & 3.34 \\
	\hline
	\multirow{2}{*}{Darting} & Alg. 7 & 2.90 & 2.99 & 3.65 & 2.93 & 3.86 & 5.97 & 25.74 & 25.80 & 1.39 & 2.15 & 2.93 \\
	 & Alg. 8 & 2.55 & \textbf{2.59} & \textbf{3.20} & \textbf{2.53} & 2.44 & 4.05 & 42.58 & 45.40 & 1.31 & 1.85 & 2.53 \\
	\hline
	\end{tabular}
	\caption{A numerical comparison of simulation results for our 8 algorithms on three normalized performance metrics, namely the number of probes made ($P$), the total distance traveled by $\Delta$ ($D$), and the number of POI responses ($R$). The best values are highlighted in bold. The category names used are crude abbreviations; see the main paper for their proper names.}
	\label{tab:algorithm_metrics}
\end{table*}

\subsection{Number of Probes Made}
We make two interesting observations regarding the number of probes made,
$P$, for our progressive probe algorithms.

\begin{observation}
	The progressive probe algorithms, Algorithms 3--8, exhibit a non-zero variance
	in $P$ experimentally.
\end{observation}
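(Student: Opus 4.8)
The plan is to trace the experimentally observed spread in $P$ to a structural cause and then certify it for the experimental value $n = 2^{20}$. Since the experiment draws the POI uniformly at random, it suffices to show that for each of Algorithms~3--8 the function sending a POI location to the number of probes used to find it is \emph{not} constant; a nonconstant integer-valued function of the random POI has strictly positive variance. I would argue from the recursion governing these algorithms, $P(n) = k + P(\rho_k n)$ with geometrically shrinking radii $\rho_k = \rho_1^{k}$, together with the ``omit the last probe'' rule they all use: if the probes $1,\dots,m-1$ at a recursive level are all negative, the POI must lie in the $m$-th region, which is therefore entered after spending only $m-1$ probes while the search radius still contracts by the full factor $\rho_1^{m}$. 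Writing $x$ for $\log_{1/\rho_1} n$, this means that if $k_1, k_2, \dots$ are the indices of the successful probes at successive levels, the search halts at the first level $L$ with $\sum_{i \le L} k_i \ge x$, and $P$ is that partial sum, except that each occurrence of $k_i = m$ costs $m-1$ rather than $m$.

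Next I would exhibit two itineraries with different totals. \emph{Itinerary $A$}: the POI lies in the first (largest) probe's disk at every level; then one probe is spent per level, the radius contracts by $\rho_1$ each time, and the recursion halts after $\lceil x \rceil$ levels, so $P_A = \lceil x \rceil$. \emph{Itinerary $B$}: at the top level the POI lies outside probes $1, \dots, m-1$, hence in region $m$, and thereafter follows itinerary $A$; then $P_B = (m-1) + \lceil \log_{1/\rho_1}(\rho_1^{m} n) \rceil = (m-1) + (\lceil x \rceil - m) = \lceil x \rceil - 1 \ne P_A$. When $\lceil x \rceil$ happens to be odd there is an even simpler argument avoiding the omit-last rule: the itinerary staying in the \emph{second} probe at every level uses $2 \lceil x/2 \rceil$ probes, the least even integer $\ge x$, which then differs from $\lceil x \rceil$; one checks $\lceil x \rceil$ is odd at $n = 2^{20}$ for several of the algorithms directly from the reported $\rho_1$ values (for instance $x \approx 70.8$ for Algorithm~4). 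The same bookkeeping, iterated, also accounts for why the experiments show $P$ spread over a whole band rather than a single value: a level at which the POI falls in the last region contributes with effective coefficient $\tfrac{m-1}{m}c$ instead of $c$, with $c = -1/\log\rho_1$, so $P$ ranges roughly between $\tfrac{m-1}{m}c\lceil\log n\rceil$ and $c\lceil\log n\rceil$ up to lower-order ceiling corrections.

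The main obstacle is \emph{realizability}: one must show that itineraries $A$ and $B$ are actually induced by some POI location. For $A$ this is immediate, since the first probe's disk is nonempty and the recursion restricted to it is a scaled copy of the whole problem, so its limiting fixed point realizes $A$. For $B$ one needs (i) that the $m$-th probe covers area covered by no earlier probe --- which is precisely why that probe is included in the algorithm, and is read directly off the explicit probe coordinates for Algorithms~3--6 --- and (ii) that from a point in this exclusive region the resulting search circle again contains a ``first probe forever'' location; since all these coverage regions are open, a sufficiently deeply nested choice works. For the greedy and differential-evolution Algorithms~7 and~8 the exclusive-region property (i) still holds but is cleanest to confirm from the computer-generated placements, so I would finish by phrasing this last step as a short computer-assisted check --- in the spirit of the paper's other such proofs --- verifying that at $n = 2^{20}$ at least two reachable itineraries yield distinct values of $P$ for each algorithm.
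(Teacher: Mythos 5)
You are proving something the paper never proves formally: this Observation is established purely by the reported simulation data (the nonzero error bars in the probe-count figure and the distinct Min/Max columns of the results table), and the text that follows it is an informal \emph{explanation}, not a proof. So your attempt is necessarily a different route --- an a priori argument that $P$ is a non-constant, hence positive-variance, function of the POI location. Your central mechanism (the omit-the-last-probe rule plus ceiling/parity effects in the itinerary sum $\sum_i k_i$ against the threshold $x=\log_{1/\rho_1}n$) is essentially the paper's \emph{secondary} explanation; the paper attributes the variance \emph{primarily} to the last recursive layer overshooting radius $1$ (a POI landing in a small probe at the final level wastes probes on excess precision, which is also what drives the observed maxima above the stated bounds). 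Your framework actually captures that overshoot implicitly ($P$ is the first partial sum reaching $x$, which can exceed $\lceil x\rceil$), so the two explanations are compatible; yours has the advantage of being checkable, the paper's the advantage of matching what the data actually shows.

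The one genuine gap is the realizability of itinerary $B$ (and of ``second probe forever''), and it is more serious than your parenthetical suggests. Non-emptiness of probe $m$'s exclusive region does not suffice: the set of POI locations inducing itinerary $B$ is the intersection of that exclusive region with a \emph{specific} small disk (the first-probe-forever disk of the recursion on probe $m$'s region), and nothing prevents that disk from being entirely covered by probes $1,\dots,m-1$, in which case itinerary $B$ is induced by no POI at all. ``A sufficiently deeply nested choice works'' does not repair this, because the nesting is dictated by the algorithm's fixed probe placement, not chosen by you. Your parity fallback has the same defect (second-probe-forever requires probe $2$'s exclusive region to contain the relevant nested disk at every level) and additionally only applies when $\lceil x\rceil$ is odd, which you verify for Algorithm~4 but not for all of Algorithms~3--8. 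So as written the argument establishes non-constancy only conditionally; the computer-assisted verification you propose at the end is not an optional polish but the step that would actually close the proof, and at that point it is doing the same work as the paper's simulations.
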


And perhaps more surprisingly, in \Cref{tab:algorithm_metrics} we observe:

\begin{observation}
	Progressive probe algorithms appear to perform more probes than their
	theoretical upper bounds.
\end{observation}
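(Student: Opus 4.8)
The plan is to resolve the apparent contradiction by recognizing that $c\lceil\log n\rceil$ of \Cref{eq:prog-queries} (with $c=-1/\log\rho_1$, the value in the ``Bound'' column) is a leading-order estimate rather than a true worst-case guarantee. Its derivation treats the ceiling in the recurrence $P(n)=k+P(\rho_k n)$ as though $\lceil\log(\rho_k n)\rceil=\lceil\log n\rceil+\log\rho_k$, which overstates how fast the search area shrinks; correcting for this produces a strictly positive additive term that the slack $\lceil\log n\rceil-\log n$ is meant to absorb but cannot when $n$ is a power of $2$, as it is throughout the experiments ($n=2^{20}$).

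First I would make the recursion exact. Fix one of Algorithms~3--8, with per-level probe budget $K$ and shrink ratio $\rho_1$: at a recursive level of current radius $r$ the probes have radii $\rho_1 r,\rho_1^2 r,\dots$; if the first success is probe $j\le K-1$ we recurse on radius $\rho_1^{j}r$ having made $j$ probes, while if probes $1,\dots,K-1$ all fail we recurse on radius $\rho_1^{K}r$ having made only $K-1$ probes (the omitted probe, which is the source of the strict inequality in \Cref{eq:prog-queries}); the recursion halts at the first level of radius $\le 1$. Writing $q_i\in\{1,\dots,K\}$ for the exponent applied at level $i$ and $p_i$ for the probes made there, we have $p_i=q_i$ unless $q_i=K$, in which case $p_i=q_i-1$, and---using $c=-1/\log\rho_1$ exactly as derived before \Cref{eq:prog-queries}---the halting index $m$ satisfies $\sum_{i<m}q_i<c\log n\le\sum_{i\le m}q_i$, since $n\prod_i\rho_1^{q_i}$ first drops to $\le 1$ at level $m$.

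Next I would read off the count. Summing, $P(n)=\sum_{i\le m}p_i=\big(\sum_{i\le m}q_i\big)-\#\{i\le m:q_i=K\}$; the last level contributes $q_m\le K$, so $P(n)<c\log n+K$, while $\#\{i:q_i=K\}\le\frac1K\sum_i q_i$ gives $P(n)\ge(1-\frac1K)c\log n$. The point is that $\sum_{i\le m}q_i-c\log n$ is the overshoot of an integer partial-sum process past the real number $c\log n$, hence strictly positive except on the measure-zero set of POI positions for which the radius lands exactly on $1$, whereas the term subtracted off counts the levels on which the POI falls into the smallest (omitted) region, which is rare once $K$ is moderately large. Taking expectations over the uniformly random POI of the experiments, a renewal-type estimate gives $\mathbb E[P(n)]=c\log n+\big(\frac{\mathbb E[q^2]}{2\,\mathbb E[q]}-s\big)+o(1)$, with $s$ the expected per-run omitted-probe saving; the bracketed constant is positive whenever the smallest per-level region is negligible, so for the large-budget Algorithms~5 and~6 and the darting Algorithms~7 and~8, and, narrowly, for Algorithm~3. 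Since at $n=2^{20}$ we have $\lceil\log n\rceil=\log n=20$, the advertised bound is exactly $20c$ with no slack, so $\mathbb E[P(n)]/20>c$; moreover $20c$ is not an integer while $P(n)$ is, so a typical run already satisfies $P(n)\ge\lceil 20c\rceil>20c=c\lceil\log n\rceil$. Dividing by $20$, this matches both the ``Avg'' and the ``Max'' entries exceeding ``Bound'' for Algorithms~3,~5,~6,~7,~8 in \Cref{tab:algorithm_metrics}, while Algorithm~4---whose five-probe non-monotone layout leaves a comparatively large last region---is the case $s>\frac{\mathbb E[q^2]}{2\,\mathbb E[q]}$, where the omitted-probe saving outweighs the overshoot and ``Avg'' dips below ``Bound.''

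The main obstacle is the quantitative part: $\mathbb E[q]$, $\mathbb E[q^2]$ and $s$ are not determined by $K$ and $\rho_1$ alone but by the (overlap-adjusted) areas of the regions routing the search into each of the $K$ branches, which are algorithm-specific geometric quantities. I would therefore present the exact recursion together with the conclusion ``$P(n)=c\log n+\Theta(1)$, with zero slack at powers of two'' as the rigorous core of the observation, and obtain numerical agreement with \Cref{tab:algorithm_metrics} by computing these areas per algorithm from the same probe placements already used for the figure-level bounds---emphasizing that the \emph{sign} of (``Avg''$-$``Bound'') is genuinely delicate, being a race between threshold overshoot and omitted-probe savings, which is exactly why the observation is phrased as ``appear to.''
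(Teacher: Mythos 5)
Your explanation matches the paper's: the paper attributes the discrepancy to the last recursive layer overshooting radius $1$ (``excessive precision,'' costing up to $|\text{probes}|-2$ extra probes --- the same additive constant your overshoot bound $P(n)<c\log n+K$ captures), partially offset by the omission of the last probe in each layer, which are exactly your two competing terms. Your extra material --- the vanishing ceiling slack at $n=2^{20}$, the integrality argument, and the heuristic renewal formula for $\mathbb{E}[P(n)]$ with its per-algorithm sign analysis --- goes beyond the paper's informal account but is consistent with it; the paper stops at the worst-case correction $P^*(n)=P(n)+|\text{probes}|-2$ and the remark that the stated bounds still hold asymptotically.
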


Both observations are primarily explained by the last recursive layer in each of
our algorithms.
Our theoretical analysis considered the number of probes needed to reduce the
search area to radius 1.
However, in practice, the probe may reduce the search area to be significantly
smaller.
For example, let's consider running Algorithm~8, with 33 probes per layer, on a
search area with radius 1 + $\epsilon$.
Algorithm~8's smallest probe has a proportionality factor $\rho_\text{min} =
0.00012$.
In other words, if the POI is unlucky enough to be in the last probe on this
last recursive layer, the search area radius will be reduced to $\approx
0.00012$, which is significantly smaller than 1, and this excessive precision
costs us extra probes.
In this worst case, where the POI is in the last probe of the last recursive
layer, we would be performing $|\text{probes}| - 1$ probes, when only 1 probe
was required.
A more precise upper bound of the number of probes $P^*(n)$, therefore, is
$P^*(n) = P(n) + |\text{probes}| - 2$.
However, for sufficiently large $n$, the coefficient in front of the $\lceil
\log{n} \rceil$ term in $P(n)$ will dominate this constant number of extra
probes, so our theoretical upper bounds will hold.

Another factor that contributes to the probe variability, albeit in the other
direction, is the fact that the last probe is omitted in each layer, while our
for the progressively sized probes assumes that every probe
is executed, and in fact supports an infinite number of probes per layer.

\subsection{Total Distance Traveled by the Search Point}
As expected, both our higher-count monotonic-path (HM) algorithms, Algorithms~5
and~6, minimize the total distance traveled by $\Delta$.
Algorithms~1--3 also have monotonic counter-clockwise paths, but their less
efficient probe sequences result in worse average distance traveled.
See \Cref{tab:algorithm_metrics}.
The hexagonal and HM algorithms each start with a
central probe,\footnote{Recall that the hexagonal algorithms can be modified to
start with the first probe instead of ending with it.} and consequently have a
minimum of 0 distance traveled.
While all algorithms experienced an instance where $\Delta$ traveled nearly as
much as their worst-case bound, reassuringly, most algorithms performed
significantly better on average, with the hexagonal and HM algorithms performing
\textasciitilde 3 times better, and the darting non-monotonic-path algorithms,
Algorithms~7 and~8, performing \textasciitilde 4 and \textasciitilde 11 times
better, respectively.
They are still likely not best suited for time-critical rescue
operations since they travel
2-3 times more than the HM algorithms
on average.

\subsection{Number of POI Responses}
Finally, we compare the number of POI responses, $R$, for our algorithms, to
see which are best suited for the case where the POI has battery constraints.
We find that, Algorithm~1 performs by far the best, with not only the fewest
responses across the board, with its maximum number of responses being lower
than the average of any other algorithm, but also with the smallest standard
deviation.
See \Cref{fig:R}.
This result is not surprising, because in
Corollary~\ref{cor:hexagonal-family}~(\ref{item:log-responses}), we learn that
Algorithm~1 is part of a larger family of response-efficient algorithms.
The next best algorithm is Algorithm~2, also partly based on a hexagonal grid,
and then the darting algorithms, Algorithms~7 and~8, which perform only a small
number of probes in general.

\begin{figure}[htb!]
	\centering
	\includegraphics[width=0.62\linewidth]{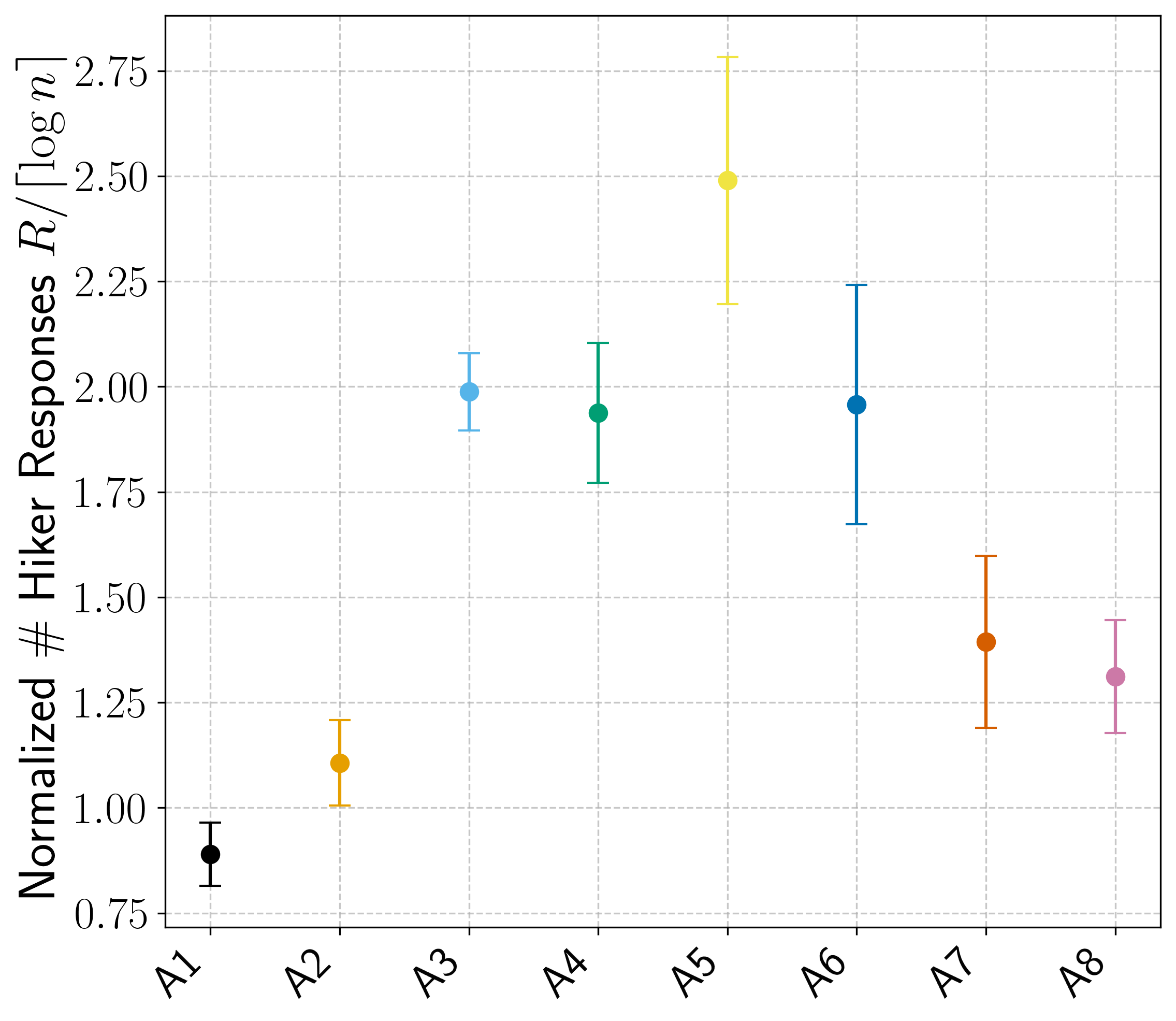}
	\caption{
		Simulation results for $R / \lceil \log{n} \rceil$%
		.
	}
	\label{fig:R}
\end{figure}

\end{document}